\DeclareFontShape{T1}{lmr}{b}{sc}{<->ssub*cmr/bx/sc}{}
\DeclareFontShape{T1}{lmr}{bx}{sc}{<->ssub*cmr/bx/sc}{}
\numberwithin{equation}{section}
\algrenewcommand{\algorithmicrequire}{\textbf{Input:}}
\algrenewcommand{\algorithmicensure}{\textbf{Output:}}
\newcommandx{\unsure}[2][1=]{\todo[linecolor=red,backgroundcolor=red!25,bordercolor=red,#1]{#2}}
\newcommandx{\change}[2][1=]{\todo[linecolor=blue,backgroundcolor=blue!25,bordercolor=blue,#1]{#2}}
\newcommandx{\info}[2][1=]{\todo[linecolor=OliveGreen,backgroundcolor=OliveGreen!25,bordercolor=OliveGreen,#1]{#2}}
\newcommandx{\improvement}[2][1=]{\todo[linecolor=black,backgroundcolor=black!25,bordercolor=black,#1]{#2}}
\newcommandx{\thiswillnotshow}[2][1=]{\todo[disable,#1]{#2}}
\crefname{proposition}{Proposition}{Propositions}
\crefname{equation}{}{}
\newtheorem{theorem}{Theorem}[section]
\newtheorem{lemma}[theorem]{Lemma}
\newtheorem{proposition}[theorem]{Proposition}
\theoremstyle{definition}
\newtheorem{definition}[theorem]{Definition}
\newtheorem{example}[theorem]{Example}
\newtheorem{remark}[theorem]{Remark}
\crefname{assumption}{Assumption}{Assumptions}
\crefname{definition}{Definition}{Definitions}
\crefname{corollary}{Corollary}{Corollaries}
\crefname{enumi}{item}{items}
\newsavebox\myboxA
\newsavebox\myboxB
\newlength\mylenA
\newcommand*\xoverline[2][0.75]{%
  \sbox{\myboxA}{$\m@th#2$}%
  \setbox\myboxB\null
  \ht\myboxB=\ht\myboxA%
  \dp\myboxB=\dp\myboxA%
  \wd\myboxB=#1\wd\myboxA
  \sbox\myboxB{$\m@th\overline{\copy\myboxB}$}
  \setlength\mylenA{\the\wd\myboxA}
  \addtolength\mylenA{-\the\wd\myboxB}%
  \ifdim\wd\myboxB<\wd\myboxA%
    \rlap{\hskip 0.5\mylenA\usebox\myboxB}{\usebox\myboxA}%
  \else
    \hskip -0.5\mylenA\rlap{\usebox\myboxA}{\hskip 0.5\mylenA\usebox\myboxB}%
  \fi}
\newcommand{\lz}{\ell_2(\Z)}
\newcommand{\ldz}{\mathbb{L}_D(\Z)}
\newcommand{\ldm}{\mathbb{L}_D(M)}
\newcommand{\qq}[1]{\bm q^{(#1)}}
\newcommand{\sss}[1]{\bm s^{(#1)}}
\newcommand{\uu}[1]{\bm u^{(#1)}}
\newcommand{\oo}[1]{\chi^{(#1)}}
\newcommand{\mm}[1]{{\mu^{(#1)}}}
\DeclareMathOperator{\prob}{P}
\DeclareMathOperator{\eCDF}{eCDF}
\DeclareMathOperator{\len}{len}
\DeclareMathOperator{\N}{\mathbb{N}}
\DeclareMathOperator{\Z}{\mathbb{Z}}
\DeclareMathOperator{\C}{\mathbb{C}}
\DeclareMathOperator{\Var}{Var}
\DeclareMathOperator{\tr}{tr}
\renewcommand{\i}{\mathbf{i}}
\renewcommand{\hat}{\widehat}
\newcommand{\ds}{\displaystyle}
\DeclareMathOperator{\diag}{diag}
\DeclareMathOperator{\BO}{\mathcal{O}}
\renewcommand{\epsilon}{\varepsilon}
\DeclareMathOperator{\dd}{d\!}
\renewcommand{\i}{\mathbf{i}}
\renewcommand{\hat}{\widehat}
\DeclareMathOperator{\iL}{{\mathsf{L}}}
\DeclareMathOperator{\iR}{{\mathsf{R}}}
\DeclareMathOperator{\iLR}{{\mathsf{L},\mathsf{R}}}
\newcommand{\ip}[2]{\left\langle #1, #2 \right\rangle}
\newcommand{\mc}[1]{\mathcal{#1}}
\newcommand{\abs}[1]{\left\lvert#1\right\rvert}
\newcommand{\norm}[1]{\left\lVert#1\right\rVert}
\newcommandx{\silvio}[2][1=]{\todo[linecolor=blue,backgroundcolor=blue!25,bordercolor=blue,#1]{Silvio: #2}}
\newcommandx{\alex}[2][1=]{\todo[linecolor=red,backgroundcolor=red!25,bordercolor=red,#1]{Alex: #2}}
\title[Density of states for random block systems]{Universal estimates for the density of states for aperiodic block subwavelength resonator systems}
\begin{document}
 \author[H. Ammari]{Habib Ammari \,\orcidlink{0000-0001-7278-4877}}
 \address{\parbox{\linewidth}{Habib Ammari\\
  ETH Z\"urich, Department of Mathematics, Rämistrasse 101, 8092 Z\"urich, Switzerland, \href{http://orcid.org/0000-0001-7278-4877}{orcid.org/0000-0001-7278-4877}}.}
 \email{habib.ammari@math.ethz.ch}
 \thanks{}

 \author[S. Barandun]{Silvio Barandun\,\orcidlink{0000-0003-1499-4352}}
  \address{\parbox{\linewidth}{Silvio Barandun\\
  ETH Z\"urich, Department of Mathematics, Rämistrasse 101, 8092 Z\"urich, Switzerland, \href{http://orcid.org/0000-0003-1499-4352}{orcid.org/0000-0003-1499-4352}}.}
  \email{silvio.barandun@sam.math.ethz.ch}

\author[B. Davies]{Bryn Davies\,\orcidlink{0000-0001-8108-2316}}
 \address{\parbox{\linewidth}{Bryn Davies\\
Mathematics Institute, University of Warwick, Zeeman Building, Coventry CV4 7AL, UK,
 \href{http://orcid.org/0000-0001-8108-2316}{orcid.org/0000-0001-8108-2316}
}}
\email{bryn.davies@warwick.ac.uk}

 \author[E.O. Hiltunen]{Erik Orvehed Hiltunen\,\orcidlink{0000-0003-2891-9396}}
\address{\parbox{\linewidth}{Erik Orvehed Hiltunen\\
Department of Mathematics, University of Oslo, Moltke Moes vei 35, 0851 Oslo, Norway, 
 \href{http://orcid.org/0000-0003-2891-9396}{orcid.org/0000-0003-2891-9396}}}
\email{erikhilt@math.uio.no}

 \author[A. Uhlmann]{Alexander Uhlmann\,\orcidlink{0009-0002-0426-6407}}
  \address{\parbox{\linewidth}{Alexander Uhlmann\\
  ETH Z\"urich, Department of Mathematics, Rämistrasse 101, 8092 Z\"urich, Switzerland, \href{http://orcid.org/0009-0002-0426-6407}{orcid.org/0009-0002-0426-6407}}.}

 \email{alexander.uhlmann@sam.math.ethz.ch}

 \begin{abstract}
We consider the spectral properties of aperiodic block subwavelength resonator systems in one dimension, with a primary focus on the density of states. We prove that for random block configurations, as the number of blocks $M\to \infty$, the integrated density of states converges to a non-random, continuous function. We show both analytically and numerically that the density of states exhibits a tripartite decomposition: it vanishes identically within bandgaps; it forms smooth, band-like distributions in shared pass bands (a consequence of constructive eigenmode interactions); and, most notably, it exhibits a distinct fractal-like character in hybridisation regions. We demonstrate that this fractal-like behaviour stems from the limited interaction between eigenmodes within these hybridisation regions. Capitalising on this insight, we introduce an efficient meta-atom approach that enables rapid and accurate prediction of the density of states in these hybridisation regions. This approach is shown to extend to systems with quasiperiodic and hyperuniform arrangements of blocks.
 \end{abstract}

\maketitle

\noindent \textbf{Keywords.}  Disordered system, random block system, subwavelength regime,  density of states, hybridisation, fractal behaviour, quasiperiodic sampling, hyperuniform sampling, Jacobi matrices and operators, metric transitivity, ergodicity.\par

\bigskip

\noindent \textbf{AMS Subject classifications.} 35J05, 35C20, 35P20.
\\

\section{Introduction}

In this paper, we study the spectral properties of large one-dimensional resonator systems modelled by the Helmholtz equation \eqref{waveeq}, where we assume that the contrast between the material parameters inside and outside the resonators, $\rho_b/\rho$, is small and the wave speeds $v$ and $v_i$ for $i=1,\ldots, N,$ are of order one with $N$ being the number of resonators. In this setting, it was shown in \cite{feppon.cheng.ea2023Subwavelength} that the system possesses $N$ subwavelength resonant frequencies, \emph{i.e.}, with corresponding eigenmodes having wavelengths much larger than the typical size of the resonators or, equivalently, being almost constant inside the resonators. In contrast with higher frequency resonant modes, known as Fabry--Perot eigenmodes \cite{fabry-perot}, the collective behaviour of subwavelength eigenmodes can be used to design metamaterials that are microstructured materials with unusual wave scattering and transport properties at operating wavelengths comparable to their total size \cite{ammari.davies.ea2021Functional,review2,review1,lemoult.fink.ea2011Acoustic,lemoult.kaina.ea2016Soda,yves.fleury.ea2017Crystalline}. Furthermore, the subwavelength resonant frequencies and eigenmodes can be approximated using the 
generalised capacitance matrix given by $VC$, where $V$ and $C$ are defined by \eqref{def:v} and \eqref{def:C}, respectively. We refer to the resonance problem for the system of Helmholtz equations \eqref{waveeq} as the continuous model and to the eigenvalue problem for the generalised capacitance matrix as the discrete model.  

We consider large aperiodic arrays of such subwavelength resonators, constructed by sampling randomly or quasiperiodically from a fixed set of \enquote{blocks}. As first shown in \cite{disorder}, such systems have quite unusual spectral behaviour: existence of hybridised bound eigenmodes, two localisation mechanisms: bandgap localisation and disorder localisation, localisation of all the eigenmodes as the size of the structure goes to infinity, and fractal behaviour of eigenmodes. Here, we provide universal estimates for the density of states (DoS) for aperiodic block subwavelength resonator systems.  In systems of subwavelength resonators, the density of states measures, in some sense, how many eigenmodes correspond to eigenfrequencies below an operating low frequency $\omega$.  
We prove that the DoS of random block disordered systems is non-random and continuous (\emph{a.e.}) as the number of blocks goes to infinity. We distinguish three different regions for the DoS: the bandgap region where the DoS is zero, the shared pass band where the DoS is a smooth band due to ``constructive interactions'', and the hybridisation region where the DoS shows fractal-like behaviour. This fractal-like DoS stems from the limited interaction between the eigenmodes in the hybridisation region. This observation allows us to quickly and accurately predict the DoS in this region using a meta-atom approach. Our approach exploits the local arrangements of block and also works for quasiperiodic and hyperuniform block disordered systems. Note that all the specific examples of block subwavelength resonator systems studied in this paper can be grouped into a more general class, that is, the class of Markov chain block systems. The theory of Jacobi operators and matrices with entries obeying a Markov process can be applied to describe their spectral properties.
 
It is worth mentioning that our estimates on the DoS in this paper extend those obtained on electronic systems described by discrete random Schr\"odinger operators (see, for example, \cite{damanik_book1, damanik_book2,bary_book} and the references therein), in particular the random word models \cite{damanik}, to classical systems of subwavelength resonators where, despite being in the low-frequency regime, the subwavelength resonances of the system interact strongly with the disorder. It should be noticed that the localisation mechanism for the Schr\"odinger models, which is due to potential wells, is fundamentally different from the one for the discrete wave models studied here. Furthermore, in electronic models, one perturbs the potential, which induces perturbations at the discrete level only of the diagonal terms of the corresponding Hamiltonian matrices, while for systems of subwavelength resonators, physical perturbations affect all the non-zero entries of the corresponding discrete approximations that are given in terms of the generalised capacitance matrices.

The tools used for studying localisation properties in aperiodic block subwavelength resonator systems in one dimension and deriving universal estimates for the DoS are the transfer and propagation matrices, the Thouless criterion of localisation, and the theory of Jacobi operators and matrices. The concept of metrically transitive operators is needed to prove ergodicity of spectral properties. 

While some approaches and results, such as the bandgap localisation mechanism and the periodisation technique for detecting the bandgaps and the mobility edges, are expected to hold in three dimensions, most of the results obtained in this paper as well as in \cite{disorder} (in particular, the fractal behaviour of the density of states in the hybridised region and the localisation of all the eigenmodes as the size of the system goes to infinity) are specific to the one-dimensional case. The extension of the tools used in this paper for the study of spectral properties of disordered subwavelength resonator systems to three dimensions seems to be difficult even though the generalised capacitance matrix can be approximated by banded Jacobi matrices. Note also that the non-Hermitian skin effect \cite{jahan,zhang.zhang.ea2022review,okuma.kawabata.ea2020Topological,ammari.barandun.ea2023NonHermitian,ammari.barandun.ea2024Mathematical}
in disordered resonator systems of subwavelength scales could be studied similarly using the theory of non-Hermitian Jacobi matrices and operators. Both the extensions of our results and methods to the three-dimensional case and to the non-reciprocal setting of the non-Hermitian skin effect are important and challenging problems. They will be the subject of forthcoming works. 

The paper is organised as follows. In \cref{sec:setting}, we introduce the notion of finite \emph{block disordered systems} of subwavelength resonators and recall the capacitance matrix and the propagation matrix formalisms to study their spectral properties.  In \cref{sec:jacobi}, we use the notion of \emph{metrically transitive operators} from \cite{pastur1992Spectra} to obtain a variety of results regarding the spectra of random block disordered systems, in particular,  ergodicity and spectral convergence. In \cref{sec:metaatom}, we introduce an efficient way to calculate the DoS for aperiodic block systems and estimate the DoS in the hybridised region. In \cref{sec:dependent sampling}, we investigate the DoS in the hybridisation regions of block disordered systems when the block sequence is no longer sampled independently. We consider multiple alternative sampling, including hyperuniform sampling and quasiperiodic sequences, and see that in all of these cases the meta-atom approximation continues to function---often even better than in the \emph{i.i.d.} sampling case. \cref{sec:hyper-demo} is devoted to hyperuniform sampling. Finally, in \cref{sec:thouless}, we recall the Thouless criterion developed in \cite{disorder} to study localisation in disordered subwavelength resonator systems.  

\section{Setting and tools}\label{sec:setting}
We begin by introducing the notions of \emph{block disordered} systems of \emph{subwavelength resonators} and the main tools used to study them, including the capacitance matrix approximation and the propagation matrix formalism. 
To that end, this section will contain some results from \cite{disorder}, and we refer to this work for further details.

\subsection{Subwavelength resonators}\label{sec:subwavelength setting}
The central systems of interest in this work are one-dimensional chains of subwavelength resonators (see \cite{ammari.davies.ea2021Functional,feppon.cheng.ea2023Subwavelength, cbms,disorder}). That is, we consider an array $\mc D=\bigsqcup_{i=1}^N (x_i^{\iL}, x_i^{\iR})$ consisting of $N$ resonators $D_i = (x_i^{\iL}, x_i^{\iR})$. We will denote $\ell_i = x_i^{\iR} - x_i^{\iL}$ for $1\leq i \leq N$ the \emph{length} of the $i$\textsuperscript{th} resonator and $s_i = x_{i+1}^{\iL}-x_{i}^{\iR}$ for $1\leq i \leq N-1$ the \emph{spacing} between the $i$\textsuperscript{th} and $(i+1)$\textsuperscript{th} resonator. 

The resonators are distinct from the background medium due to differing wave speeds and densities, given by
\begin{equation}
    v(x) = \begin{cases}
        v_i &\text{if } x\in D_i,\\
        v &\text{if } x\in \mathbb{R}\setminus \mc D, 
    \end{cases} \quad
    \rho(x) = \begin{cases}
        \rho_b &\text{if } x\in D_i,\\
        \rho &\text{if } x\in \mathbb{R}\setminus \mc D.
    \end{cases}
\end{equation}
After further imposing an outward radiation condition, we obtain the following coupled system of Helmholtz equations for the resonant modes (see \cite[(1.6)]{feppon.ammari2022Subwavelength}). The resonance problem is to find $\omega$ such that there is a non-trivial solution $u$ to

\begin{align}
\label{waveeq}
    \begin{dcases}
\ds \frac{\mathrm{d}^2}{\mathrm{d}x^2} u + \frac{\omega^2}{v^2} u = 0,  &\text{in }  
\mathbb{R}\setminus \mc D,\\
\ds \frac{\mathrm{d}^2}{\mathrm{d}x^2} u + \frac{\omega^2}{v_i^2} u = 0,  & \text{in }   D_i, i=1,\ldots, N,\\
 u\vert_{\iR}(x^{\iLR}_i) - u\vert_{\iL}(x^{\iLR}_i) = 0,                                                                 & \text{for } i=1, \ldots, N,\\
        \left.\frac{\dd u}{\dd x}\right\vert_{\iR}(x^{\iL}_i) - \frac{\rho_b}{\rho} \left.\frac{\dd u}{\dd x}\right\vert_{\iL}(x^{\iL}_i) = 0, &   \text{for } i=1, \ldots, N,          \\
        \left.\frac{\dd u}{\dd x}\right\vert_{\iL}(x^{\iR}_i) - \frac{\rho_b}{\rho} \left.\frac{\dd u}{\dd x}\right\vert_{\iR}(x^{\iR}_i) = 0, &   \text{for } i=1, \ldots, N,          \\
\bigg( \frac{\mathrm{d}}{\mathrm{d} |x|} - \i \frac{\omega}{v} \bigg) u(x) =0, & \text{for } |x| \text{ large enough,} 
\end{dcases}
\end{align}
where for a function $w$ we denote by 
\begin{align*}
    w\vert_{\iL}(x) \coloneqq \lim_{s \downarrow 0} w(x-s) \quad \mbox{and} \quad  w\vert_{\iR}(x) \coloneqq \lim_{s \downarrow 0} w(x+s)
\end{align*}
if the limits exist. 

We are interested in the \emph{subwavelength high-contrast regime}. Namely, we denote by $\delta = \rho_b / \rho$ the \emph{contrast parameter} and consider the resonant frequencies  $\omega(\delta)$ with non-negative real part that satisfy
\begin{equation*}
    \omega(\delta) \to 0 \quad \text{ as } \quad \delta \to 0.
\end{equation*}

As shown in \cite{feppon.cheng.ea2023Subwavelength}, in this regime there exist exactly $N$ subwavelength resonant frequencies, characterised in leading order by a \emph{material matrix} $V$ that is diagonal and a tridiagonal \emph{capacitance matrix} $C$, respectively defined as
\begin{gather} 
    V = \diag \left(\frac{v_1^2}{\ell_1}, \dots, \frac{v_N^2}{\ell_N}\right) \in \mathbb{R}^{N\times N}, \label{def:v}
 \end{gather}
 and    
    \begin{gather}
    C = \left(\begin{array}{cccccc}
         \frac{1}{s_1}& -\frac{1}{s_1} \\
         -\frac{1}{s_1}& \frac{1}{s_1}+\frac{1}{s_2}& -\frac{1}{s_2} \\
         & -\frac{1}{s_2} & \frac{1}{s_2}+\frac{1}{s_3}& -\frac{1}{s_3}\\
         &&\ddots&\ddots&\ddots \\
         &&&-\frac{1}{s_{N-2}}& \frac{1}{s_{N-2}}+\frac{1}{s_{N-1}}& -\frac{1}{s_{N-1}}\\
         &&&&-\frac{1}{s_{N-1}}&\frac{1}{s_{N-1}}
    \end{array}\right) \in \mathbb{R}^{N\times N}. \label{def:C}
\end{gather}
The following results are from \cite{feppon.cheng.ea2023Subwavelength}. 
\begin{theorem}\label{thm:capapprox}
    Consider a system consisting of $N$ subwavelength resonators in $\mathbb{R}$. Then, there exist exactly $N$ subwavelength resonant frequencies $\omega(\delta)$ that satisfy $\omega(\delta)\to 0$ as $\delta\to 0$. Moreover, the $N$ resonant frequencies are given by 
    \[
        \omega_i(\delta) = \sqrt{\delta\lambda_i} + \mc O (\delta),
    \]
    where $(\lambda_i)_{1\leq i\leq N}$ are the eigenvalues of the eigenvalue problem
    \[
        VC\bm u_i = \lambda_i \bm u_i.
    \]
    Furthermore, the corresponding resonant modes $u_i(x)$ are approximately given by 
    \[
        u_i(x) = \sum_{j=1}^N \bm u_i^{(j)}V_j(x) + \mc O (\delta),
    \]
    where $\bm u_i^{(j)}$ is the $j$\textsuperscript{th} entry of the vector $\bm u_i$ and $V_j(x)$, $j=1,\ldots, N$, are defined as the solution to
    \begin{align*}
        \begin{dcases}
          -\frac{\dd{^2}}{\dd x^2} V_j(x) =0, & x\in\mathbb{R}\setminus \mc D, \\
          V_j(x)=\delta_{ij},              & x\in D_i, \ i=1,\ldots, N,                         \\
          V_j(x) = \BO(1),                  & \mbox{as } \abs{x}\to\infty.
        \end{dcases}
        \label{eq: def V_i}
    \end{align*}
\end{theorem}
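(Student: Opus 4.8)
The plan is to collapse the continuous resonance problem \eqref{waveeq} onto the finite-dimensional spectral problem for $VC$ by a matched-asymptotic analysis in the contrast $\delta$, exploiting two structural features of the subwavelength regime: since $\omega = \BO(\sqrt{\delta})$, inside each resonator the mode is almost constant (the interior equation forces $u'' = -(\omega^2/v_i^2)u = \BO(\delta)$), whereas across each gap it is almost affine.

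First I would record the piecewise structure: on $D_i$, on each of the $N-1$ gaps, and on the two semi-infinite components, $u$ solves a constant-coefficient Helmholtz equation and is therefore a combination of exponentials $e^{\pm\i\omega x/v_i}$ (resp. $e^{\pm\i\omega x/v}$). Writing $u_i$ for the leading-order interior value $u\vert_{D_i}$ and integrating the interior equation over $D_i$ gives
\[
u'\vert_{\iL}(x_i^{\iR}) - u'\vert_{\iR}(x_i^{\iL}) = -\frac{\omega^2}{v_i^2}\int_{D_i} u\,\dd x = -\frac{\omega^2 \ell_i}{v_i^2}\,u_i + \BO(\delta^2).
\]
Next I would invoke the flux transmission conditions to trade each interior derivative for $\delta$ times the exterior (gap-side) derivative, which is exactly where the factor $\delta$ enters. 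Because $u$ is affine across each gap to leading order, those exterior derivatives are the difference quotients $(u_{i+1}-u_i)/s_i$ and $(u_i-u_{i-1})/s_{i-1}$, so the left-hand side above equals $-\delta$ times the $i$th entry of $C\bm u$. The $N$ integrated identities therefore read $\delta\,(VC\bm u)_i = \omega^2 u_i + \text{h.o.t.}$, i.e. $\delta\,VC\bm u = \omega^2\bm u$ to leading order, which is precisely $\omega^2 = \delta\lambda_i + \BO(\delta^{3/2})$ with $\bm u$ the associated eigenvector. Feeding $\bm u$ back through the piecewise representation recovers the modal profile $u_i(x) = \sum_j \bm u_i^{(j)} V_j(x) + \BO(\delta)$.

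The two outermost resonators require a separate look: on their outer side the exterior derivative is fixed by the radiation condition, giving $u' = \i(\omega/v)u = \BO(\sqrt\delta)$, so after multiplication by $\delta$ this term is $\BO(\delta^{3/2})$ and disappears at leading order---this is exactly why the first and last rows of $C$ carry only a single $1/s_1$ and $1/s_{N-1}$, producing the Neumann-type endpoints.

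The hard part will be upgrading this formal matching to the stated error bounds together with the exact count of $N$ resonances. I would make the representation rigorous by recasting \eqref{waveeq} through a boundary-integral / Dirichlet-to-Neumann formulation whose resonant frequencies are the characteristic values of an operator-valued analytic function $\mc{A}(\omega,\delta)$; a Gohberg--Sigal (generalised argument principle) count on a circle $|\omega|\lesssim\sqrt\delta$ then pins down exactly $N$ such values, and an implicit-function argument---after verifying that the $\lambda_i$ are simple, or a suitable perturbation argument otherwise---yields the asymptotics $\sqrt{\delta\lambda_i}+\BO(\delta)$. The principal technical obstacle is the uniform remainder control: showing that the interior-constant and gap-affine approximations hold with genuine $\BO(\delta)$ errors in the relevant norm, and that the reduced problem is invertible at frequencies bounded away from $\{\sqrt{\delta\lambda_i}\}$, so that the discrete model captures all subwavelength resonances and no spurious ones.
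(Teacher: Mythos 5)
The paper does not prove this theorem itself---it is imported from \cite{feppon.cheng.ea2023Subwavelength}---and your proposal follows essentially the same route as that reference: the formal reduction $\delta\, VC\bm u = \omega^2\bm u$ obtained by integrating the interior equation over each resonator, trading interior for exterior fluxes through the $\delta$-weighted transmission conditions, and using the leading-order affine gap profiles, then made rigorous by a Dirichlet-to-Neumann (in one dimension, fully explicit) reformulation whose characteristic values are counted via Gohberg--Sigal theory on a disc of radius $\mathcal{O}(\sqrt{\delta})$. The technical points you flag are exactly those handled there, and the simplicity of the $\lambda_i$ you ask for is automatic since $VC\sim V^{1/2}CV^{1/2}$ is tridiagonal with nonzero off-diagonal entries, so the plan is sound.
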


We can thus fully understand the subwavelength resonant modes by studying the eigenvalue problem for the \emph{generalised capacitance matrix} $\mc C \coloneqq VC$ and will often use $\lambda_i$ and $\omega_i$ interchangeably, where it is clear from context. Note that the eigenvalue problem for $\mc C$ plays an analogous role as the tight binding model for Schr\"odinger operators. Therefore, we refer to it as the discrete model for the Helmholtz operator in \eqref{waveeq}.

\subsection{Block disordered systems}
Since we are interested in the band structure of disordered systems, where local translation invariance is broken, we must develop ways to construct and describe such systems. As discussed in \cite{disorder}, one such way is to consider a finite number of distinct \enquote{building blocks} consisting of (possibly multiple) subwavelength resonators. Later, constructing disordered resonator arrays from simple building blocks will enable us to characterise the ``unusual'' spectral properties of the array by looking at the building blocks.

A subwavelength block disordered system is a chain of subwavelength resonators consisting of $M$ blocks of resonators $B_{\oo{j}}$ sampled accordingly to a sequence $$\chi \in \{1,\dots , D\}^M \eqqcolon \ldm$$ from a selection $B_1, \dots, B_D$ of $D$ distinct resonator blocks, arranged on a line. Each resonator block $B_j$ is characterised by a sequence of tuples $(v_1,\ell_1,s_1), \dots, (v_{\len(B_j)},\ell_{\len(B_j)},s_{\len(B_j)})$ that denote the wave speed, length, and spacing of each constituent resonator. Here, $\len(B_j)$ denotes the total number of resonators contained within the block $B_j$.

We will often abuse notation and write $\mc D = \bigcup_{j=1}^M B_{\oo{j}}$ to denote the resonator array constructed by sampling the blocks $B_1, \dots B_D$  according to $\chi\in \ldm$ and then arranging them in a line. Having thus constructed an array of subwavelength resonators, we can alternatively write that $\mc D = \bigcup_{i=1}^N D_i$ in line with \cref{sec:subwavelength setting}.
Note that as $M$ denotes the total number of sampled blocks and $N$ the total number of resonators, we always have $M\leq N$.

\begin{example}\label{ex:standard_blocks}
    Simple but non-trivial disordered systems can be obtained by sampling from a set of just two blocks $B_1$ and $B_2$. A canonical example of this, which we will consider extensively in this work, is where $B_1=B_{single}$ denotes a single resonator block with $\len(B_{single}) = 1$ and $\ell_1(B_{single}) = s_1(B_{single})  = 2$ while $B_2=B_{dimer}$ is a dimer resonator block with $\len(B_{dimer}) = 2$ and $\ell_1(B_{dimer}) = \ell_2(B_{dimer})=1$ and $s_1(B_{dimer})=1, s_2(B_{dimer})=2$. We choose all wave speeds to be equal to $1$.
    An example of a single realisation  of this system with $M=14$ is described by the following sequence:
    \[
        \chi = (1,1,1,1,2,1,1,1,1,1,2,1,1,1) \in \{1,2\}^M . 
    \]
    In \cref{fig:disorderedsketch}, another realisation corresponding to the sequence $\chi = (1,2,1)$ is illustrated.
\end{example}

Notably, when repeated periodically, both the single resonator and the dimer of resonators share a lower band. However, as will be shown in \cref{fig:prop mat and cdf}, the dimer possesses an additional upper band and the distinct properties of these building blocks affect the resonant frequencies of the resulting array. 

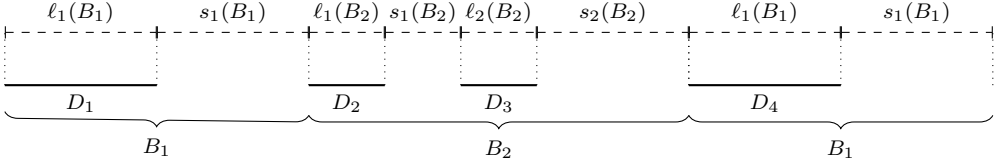
\begin{figure}
    \centering
    \begin{tikzpicture}[scale=1.0, every node/.style={font=\footnotesize}]

    \draw[thick] (0,0) -- (2,0);
    \node[below] at (1,0) {$D_1$};
    
    \draw[
    decorate,
    decoration={brace, mirror, amplitude=5pt}
    ] (0,-0.35) -- (4,-0.4)
    node[midway, yshift=-0.45cm] {$B_1$};
    
    \draw[-, dotted] (0,0) -- (0,0.7);
    \draw[-, dotted] (2,0) -- (2,0.7);
    \draw[|-|, dashed] (0,0.7) -- node[above]{$\ell_1(B_1)$} (2,0.7);
    
    \draw[-, dotted] (4,0) -- (4,0.7);
    \draw[|-|, dashed] (2,0.7) -- node[above]{$s_1(B_1)$} (4,0.7);
    
    \draw[thick] (4,0) -- (5,0);
    \node[below] at (4.5,0) {$D_2$};
    \draw[thick] (6,0) -- (7,0);
    \node[below] at (6.5,0) {$D_3$};
    
    \draw[
    decorate,
    decoration={brace, mirror, amplitude=5pt}
    ] (4,-0.4) -- (9,-0.4)
    node[midway, yshift=-0.45cm] {$B_2$};
    
    \draw[-, dotted] (5,0) -- (5,0.7);
    \draw[-, dotted] (6,0) -- (6,0.7);
    \draw[-, dotted] (7,0) -- (7,0.7);
    \draw[-, dotted] (9,0) -- (9,0.7);
    \draw[|-|, dashed] (4,0.7) -- node[above]{$\ell_1(B_2)$} (5,0.7);
    \draw[|-|, dashed] (5,0.7) -- node[above]{$s_1(B_2)$} (6,0.7);
    \draw[|-|, dashed] (6,0.7) -- node[above]{$\ell_2(B_2)$} (7,0.7);
    \draw[|-|, dashed] (7,0.7) -- node[above]{$s_2(B_2)$} (9,0.7);

    \begin{scope}[shift={(9,0)}]
    \draw[thick] (0,0) -- (2,0);
    \node[below] at (1,0) {$D_4$};
    \draw[
    decorate,
    decoration={brace, mirror, amplitude=5pt}
    ] (0,-0.4) -- (4,-0.4)
    node[midway, yshift=-0.45cm] {$B_1$};
    
    \draw[-, dotted] (0,0) -- (0,0.7);
    \draw[-, dotted] (2,0) -- (2,0.7);
    \draw[|-|, dashed] (0,0.7) -- node[above]{$\ell_1(B_1)$} (2,0.7);
    
    \draw[-, dotted] (4,0) -- (4,0.7);
    \draw[|-|, dashed] (2,0.7) -- node[above]{$s_1(B_1)$} (4,0.7);
    \end{scope}
\end{tikzpicture}
    \caption{A block disordered system consisting of two single resonator blocks $B_1$ and a dimer block $B_2$ arranged in a chain given by the sequence $\chi= (1,2,1)$. It thus consists of $M=3$ blocks and $N=4$ resonators $D_1,\dots ,D_4$ in total.}
    \label{fig:disorderedsketch}
\end{figure}
\subsection{Sampling}\label{ssec:sampling}
Apart from the makeup of the individual blocks, the crucial factor determining the behaviour of block disordered systems is how these blocks are sampled. The prototypical sampling is to select the block independently and identically with sampling probabilities $p_1, \dots , p_D$, such that $\sum_{d=1}^D p_d=1$. 

To enable later analysis, we now make this sampling formal for infinite sequences in a probability-theoretic sense.
\begin{definition}[Independently and identically sampled discrete sequences]
    We denote the discrete sequence space $\mathbb{L}_D(\Z) \coloneqq \{1,\dots, D\}^{\Z}$ and construct the probability space of \emph{independently and identically sampled discrete sequences} $(\ldz, \mc F, \prob)$ as follows:

    Let 
    \[
        C(i_1,\dots, i_n, X_1\times\dots \times X_n)\coloneqq \{\chi \in \ldz \mid (\oo{i_1},\dots, \oo{i_n}) \in X_1\times\dots \times X_n\}
    \] be the \emph{cylinder set} for $X_1,\dots, X_n \subset \{1,\dots, D\}$ and $i_1,\dots, i_n\in \Z$. On these cylinder sets we define the probability measure
    \[
        \prob (C(i_1,\dots, i_n, X_1\times\dots \times X_n)) \coloneqq F(X_1)\cdot\dots\cdot F(X_n),
    \]
    where $F(X) = \sum_{j=1}^D p_j \mathbb{1}_j(X)$ is the block probability distribution.
    Extending $\prob$ to the minimal $\sigma$-algebra containing all cylinder sets $\mc F$, we obtain the probability space $(\ldz, \mc F, \prob)$.
\end{definition}

We can extend this definition to finite sequences on $\ldm$ by employing the \emph{truncation projection} $\mc P_M:\ldz \to \ldm$ to push forward the probability space $(\ldz, \mc F, \prob)$ to $(\ldm, \mc P_M^*\mc F, \mc P_M^*\prob)$. 

Independent and identically distributed (\emph{i.i.d.}) sampling will prove to be a simple but rich and flexible base case for the investigation of block disordered systems. Therefore, for the rest of this section as well as for sections \cref{sec:jacobi} and \cref{sec:metaatom}, when we talk about \enquote{random block disordered systems} we will consider systems with sequences sampled in an \emph{i.i.d.} manner. However, our analysis is not only restricted to such systems and in \cref{sec:dependent sampling} we shall discuss block disordered systems with other kinds of sampling.

At this point, we would also like to connect the sampling of blocks to the corresponding sequence of resonators. Consider a block disordered system with blocks $B_1, \dots B_D$ sampled with probability $p_1, \dots, p_D$ yielding an \emph{i.i.d.} block sequence $\chi \in \ldz$. This uniquely determines the sequence of resonators, which we shall encode using tuples from the set
\[
    R \coloneqq\left\{ (d,r) \mid d\in \{1,\dots D\}, r\in \{1,\dots, \len(B_d)\}\right\} \subset \N^2.
\]
The tuple $(d,r)$ denotes the $r$\textsuperscript{th} resonator of the $d$\textsuperscript{th} block. We thus get an injective map 
\begin{align*}
    \Phi: \ldz &\to R^{\Z}\\
    \chi &\mapsto \mu \coloneqq (\dots, (\oo{0}, 1), \dots, (\oo{0}, \len(B_{\oo{0}})), (\oo{1}, 1), \dots)
\end{align*}
which we can use to push forward the \emph{i.i.d.} probability space $(\ldz, \mc F, \prob)$ to $(R^{\Z}, \mc F_R, \prob_R)$ where $\mc F_R \coloneqq \Phi^*\mc F$ and  $\prob_R \coloneqq \Phi^*\prob$. 

From this construction, it is easy to see that any resonator sequence $\mu = \Phi(\chi) \in R^{\Z}$ is a bi-infinite Markov chain with finite state space. In particular, if a resonator tuple $(d, r)$ is not at the end of the block (so $r<\len (B_{d})$), then it is deterministically followed by $(d, r+1)$. Conversely, if  $(d, r)$ is at the end of a block (so $r=\len (B_{d})$) then it is followed by one of $\{(1,1), \dots, (D,1)\}$ with probability $p_1,\dots, p_D$, respectively. From this observation, it follows that this Markov chain is homogeneous and irreducible, as the transition probabilities are independent of location and any state is reachable from any other.

\begin{example}
    Consider a block disordered system with blocks that are either a single resonator $B_1 = B_{single}$ or a dimer block $B_2 = B_{dimer}$, as in \cref{ex:standard_blocks}. We sample the blocks \emph{i.i.d.} with $B_1$ and $B_2$ occurring with probability $p_1$ and $p_2$, respectively, yielding a sequence $\chi \in \mathbb{L}_2(\Z)$. 
    The corresponding resonator sequence $\mu = \Phi(\chi) \subset R^{\Z}$ consists of the tuples $R = \{(1,1), (2,1), (2,2)\}$ encoding the single resonator and the two dimer resonators respectively. The finite state space $R$ then allows us to collect the transition probabilities in the \emph{transition matrix}
    \[
    P =\begin{pmatrix}
        p_1 & p_2 & 0\\
        0 & 0 & 1\\
        p_1 & p_2 & 0
    \end{pmatrix},
    \]
    where $P_{ij}$ is the transition probability from the $i$\textsuperscript{th} state to the $j$\textsuperscript{th} state and we encode the block tuples $(1,1), (2,1), (2,2)$ as states $1,2,3$ respectively.
\end{example}

\subsection{Propagation matrix formalism}
For one-dimensional systems, we can characterise the resonator interactions through a series of \emph{propagation matrices}. This will in turn allow us to obtain many of the propagation properties of the total system from the properties of the constituent propagation matrices. 

For the remainder of this subsection we consider an arbitrary block disordered system $\mc D = \bigcup_{j=1}^M B_{\oo{j}} = \bigcup_{i=1}^ND_i$ consisting of blocks $B_1, \dots B_D$ arranged in a sequence $\chi\in \ldm$ and described by a generalised capacitance matrix $\mc C = VC$. Our aim is to identify gaps in the spectrum $\sigma(\mc C)$. In this context, the \emph{propagation matrices} are defined as follows:
\begin{definition}[Propagation matrices]
    For the $i$\textsuperscript{th} subwavelength resonator with length $\ell_i$ and spacing $s_i$ we define the \emph{propagation matrix} at frequency $\lambda$ to be the 2-by-2 matrix propagating any solution $u(x)$ of subwavelength resonance problem \cref{waveeq} from the left edge of the resonator $x_i^{\iL}$ to the left edge of the following resonator $x_i^{\iL}$. Namely
    \begin{equation}\label{eq:propmatdef}
    \begin{pmatrix}
        u_-(x^{\iL}_{i+1})\\
        u_-'(x^{\iL}_{i+1})
    \end{pmatrix} = \underbrace{\begin{pmatrix}
        1-s_i\frac{\ell_i}{v_i^2}\lambda & s_i\\
        -\frac{\ell_i}{v_i^2}\lambda & 1
    \end{pmatrix}}_{P_{l_i,s_i}(\lambda)\coloneqq} \begin{pmatrix}
        u_-(x^{\iL}_{i})\\
        u'_-(x^{\iL}_{i})
    \end{pmatrix},
\end{equation}
where $u_-(x^{\iL}_{i}) = \lim_{x\uparrow x^{\iL}_i}u(x)$ and $u'_-(x^{\iL}_{i})= \lim_{x\uparrow x^{\iL}_i}u'(x)$ denote the \emph{exterior} values of the eigenmode $u(x)$.

We can then characterise the propagation through the entire system using the \emph{total propagation matrix} $P_{tot}(\lambda) \coloneqq \prod_{i=1}^N P_{\ell_i,s_i}(\lambda)$.

Crucially, the propagation matrix $P_{\ell_i,s_i}(\lambda)$ of the $i$\textsuperscript{th} resonator depends \emph{only} on the properties of that resonator.
As a consequence, the \emph{block propagation matrices} $P_{B_d}(\lambda)$ are well-defined as
\begin{equation}
    P_{B_d}(\lambda) \coloneqq \prod_{k=1}^{\len(B_d)}P_{\ell_k(B_d), s_k(B_d)}(\lambda).
\end{equation}
Similarly, we can decompose the \emph{total propagation matrix} $P_{tot}(\lambda)$ as the product of the block propagation matrices
\begin{equation*}
    P_{tot}(\lambda) = \prod_{j=1}^MP_{B_{\oo{j}}}(\lambda).
\end{equation*}
We further note that because $\det P_{\ell_i,s_i}(\lambda) = 1$ for any $\ell_i, s_i, \lambda$ all products of propagation matrices, including block and total propagation matrices, must also have determinant one. We refer to \cite{disorder} for further details on the derivations of the propagation matrices.
\end{definition}

Having defined the block propagation matrices $P_{B_d}(\lambda)$, we now aim to relate their properties to the spectrum $\sigma(\mc C)$ of $\mc C$. We begin by making rigorous the notions of \emph{pass bands} and \emph{bandgaps} for the constituent blocks. To do so, we investigate the eigenvalues of $P_{B_d}(\lambda)$, which we shall denote by $\xi_1$ and $\xi_2$, labelled such that they are sorted by ascending absolute value: $\abs{\xi_1} \leq \abs{\xi_2}$. Because all propagation matrices have determinant one, we find the following characterisation:
\begin{equation}
    \begin{cases}
        \xi_1 = \overline{\xi_2} \in \mathbb{S}^1\subset \C & \text{if } \abs{\tr P_{B_d}(\lambda)}\leq2, \\
        0 < \abs{\xi_1} < 1 < \abs{\xi_2} & \text{if } \abs{\tr P_{B_d}(\lambda)}>2,
    \end{cases}
\end{equation}
where $\tr$ denotes the trace. 
Hence, we can see that a periodised arrangement of the block $B_d$ supports oscillatory extended modes if $\abs{\tr P_{B_d}(\lambda)}\leq2$ while any frequency for which $\abs{\tr P_{B_d}(\lambda)}>2$ must be exponentially growing or decaying and can thus not be a resonant frequency of the periodised system. 
Thus, it is natural to consider a frequency $\lambda\in \mathbb{R}$ in the \emph{pass band} of the block $B_d$ if $\abs{\tr P_{B_d}(\lambda)}\leq2$ and in the \emph{bandgap} if $\abs{\tr P_{B_d}(\lambda)}>2$. For this block, the growth or decay of modes with frequency $\lambda$ passing through is determined by the larger block propagation matrix eigenvalue $\abs{\xi_2}(\lambda)$ of $P_{B_d}(\lambda)$.

In \cite{disorder},  the following Saxon--Hutner-type theorem \cite{saxonhutner} relating the bandgaps of the total propagation matrix (and thus the total system) to the bandgaps of the constituent blocks is proved.
\begin{theorem}\label{thm:saxonhutner}
    Consider a block disordered system with blocks $B_1, \dots, B_D$, sequence $\chi \in \ldm$ and a corresponding generalised capacitance matrix $\mc C$. If a given frequency $\lambda\in \mathbb{R}$ lies in the bandgap of all constituent blocks, that is,
    
    \[
        \abs{\tr P_{B_d}(\lambda)} > 2 \quad \text{ for all }d=1,\dots, D,
    \]
    then $\lambda$ must also lie in the bandgap of the entire system. Namely, 
    $\lambda\notin \sigma(\mc C)$ and $\abs{\tr P_{tot}(\lambda)} >2.$
\end{theorem}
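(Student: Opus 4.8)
The two conclusions, $\lambda\notin\sigma(\mc C)$ and $\abs{\tr P_{tot}(\lambda)}>2$, are both statements about the placement of the single matrix $P_{tot}(\lambda)=\prod_{j=1}^M P_{B_{\oo{j}}}(\lambda)$ inside $\mathrm{SL}_2(\R)$, so the plan is to prove one structural fact about products of the block matrices $P_{B_1}(\lambda),\dots,P_{B_D}(\lambda)$ and then read off both consequences. For the spectral conclusion I would use the secular equation characterising $\sigma(\mc C)$ through $P_{tot}(\lambda)$: the Neumann-type end conditions built into the corner entries of $C$ force a fixed off-diagonal entry of $P_{tot}(\lambda)$ to vanish at any $\lambda\in\sigma(\mc C)$, so it suffices to show that this entry stays nonzero throughout the common gap. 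Both $\abs{\tr P_{tot}}>2$ and the non-vanishing of that entry will follow from placing $P_{tot}(\lambda)$ in a suitable sub-semigroup of $\mathrm{SL}_2(\R)$.

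The core of the argument is an invariant-cone (equivalently, invariant-arc) method. First I would record the elementary but essential warning that a product of two elements of $\mathrm{SL}_2(\R)$ with $\abs{\tr}>2$ need \emph{not} again have $\abs{\tr}>2$, since two hyperbolic matrices with transverse axes can multiply to an elliptic one; hence the proof must exploit the explicit form of the propagation matrices together with the common-gap hypothesis, not merely hyperbolicity of each factor. The plan is then to produce a proper cone $\mc K\subset\R^2$ (a closed sector, equivalently a closed arc $J\subset\mathbb{RP}^1$ of directions $u'/u$) such that every block matrix maps it strictly into its interior, $P_{B_d}(\lambda)(\mc K\setminus\{0\})\subset\mathrm{int}\,\mc K$ for all $d=1,\dots,D$. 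Strict cone-contraction is preserved under composition, so the total matrix $P_{tot}(\lambda)$, being an arbitrary product of the $P_{B_d}(\lambda)$, also maps $\mc K$ strictly into its interior.

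It then remains to harvest the two conclusions from the cone. For the trace bound I would prove the short lemma that a matrix $M\in\mathrm{SL}_2(\R)$ mapping a proper cone strictly into its interior is hyperbolic: an elliptic element ($\abs{\tr}<2$) is conjugate to a rotation and preserves no proper cone, while a parabolic element ($\abs{\tr}=2$) has a single invariant direction on $\mathbb{RP}^1$ and cannot push an entire arc into its interior; hence $\abs{\tr M}>2$, and in particular $\abs{\tr P_{tot}(\lambda)}>2$. The dominant eigendirection lies inside $\mc K$, and by choosing the sector $\mc K$ to be disjoint from the coordinate directions $u=0$ and $u'=0$ one reads off that the relevant off-diagonal entry of $P_{tot}(\lambda)$ keeps a fixed sign, in particular is nonzero; by the secular equation this yields $\lambda\notin\sigma(\mc C)$.

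The main obstacle is the construction of the common cone $\mc K$, since this is exactly where the hypothesis must be used and where the general statement fails. The task is to show that, at a frequency $\lambda$ lying in the gap of every block, the attracting and repelling fixed points of the Möbius actions of the distinct blocks on $\mathbb{RP}^1$ are consistently nested: all attracting directions in one arc and all repelling directions in the complementary arc, separated by a single sector. I would attempt this by exploiting the rigid sign pattern of the resonator matrices $P_{\ell_i,s_i}(\lambda)$—whose off-diagonal entries $s_i>0$ and $-\ell_i\lambda/v_i^2$ have $\lambda$-determined signs—to show that each $P_{B_d}(\lambda)$ lies in one fixed cone-contracting sub-semigroup as soon as $\abs{\tr P_{B_d}(\lambda)}>2$, so that the common-gap condition places all of them in the same semigroup simultaneously. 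An equivalent and perhaps cleaner route is to phrase everything through the Weyl $m$-function/Herglotz variable natural to the Jacobi-operator viewpoint of \cref{sec:jacobi}, where in a gap the block maps act monotonically on the real axis and the nesting of fixed points becomes a monotonicity statement. Verifying this simultaneous nesting for all $D$ blocks is the crux; everything else is formal.
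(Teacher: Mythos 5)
First, note that the paper itself does not prove \cref{thm:saxonhutner}: the statement is imported verbatim from \cite{disorder}, so your proposal can only be judged on its own completeness --- and it is not complete. Its formal skeleton is correct: $\lambda\in\sigma(\mc C)$ if and only if $(P_{tot}(\lambda))_{21}=0$; strict cone contraction is stable under arbitrary products; and an $\mathrm{SL}_2(\mathbb{R})$ matrix mapping a proper closed cone strictly into its interior is hyperbolic, so $\abs{\tr P_{tot}(\lambda)}>2$ would follow. But the one step that carries all of the content --- the existence of a \emph{single} cone $\mc K$ that every block matrix $P_{B_d}(\lambda)$ maps strictly into its interior, at every $\lambda$ in the common gap --- is deferred as ``the crux'' and never carried out. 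This cannot be waved through: as you yourself note, hyperbolicity of each factor does not survive products in general (this is exactly why the classical Saxon--Hutner conjecture is false for general transfer matrices), so the common cone, if it exists, must be extracted from the specific factorisation $P_{\ell,s}(\lambda)=\bigl(\begin{smallmatrix}1&s\\0&1\end{smallmatrix}\bigr)\bigl(\begin{smallmatrix}1&0\\-\ell\lambda/v^2&1\end{smallmatrix}\bigr)$ with $s,\ell,\lambda>0$ combined with the gap hypothesis. Your two suggested routes (sign patterns; Herglotz monotonicity) are sensible pointers, but the assertion that $\abs{\tr P_{B_d}(\lambda)}>2$ automatically places each block ``in one fixed cone-contracting sub-semigroup'' is a restatement of the theorem, not a proof of it. A proposal that reduces the statement to an unproven claim of essentially the same depth has not proved it.

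Second, there is a concrete error in how you extract $\lambda\notin\sigma(\mc C)$ from the cone. You require $\mc K$ to be \emph{disjoint} from the coordinate directions; with that choice the spectral conclusion is out of reach, because strict invariance of $\mc K$ says nothing about vectors outside $\mc K$. In particular, nothing prevents $(1,0)^T$ from being the \emph{repelling} eigendirection of $P_{tot}(\lambda)$ (which always lies outside any strictly invariant cone), in which case $(P_{tot}(\lambda))_{21}=0$ and $\lambda\in\sigma(\mc C)$ even though $P_{tot}(\lambda)$ is hyperbolic and contracts $\mc K$. Concretely, $M=\begin{pmatrix}1/2&1\\0&2\end{pmatrix}$ has $\det M=1$ and $\tr M>2$, maps the cone of slopes $[1,2]$ --- disjoint from both coordinate directions --- strictly into its interior, and still has $M_{21}=0$. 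The condition you need is the opposite one: the direction $(1,0)^T$ must lie \emph{in} $\mc K$, on its boundary, so that strict invariance forces $P_{tot}(\lambda)(1,0)^T\in\operatorname{int}\mc K$, which is then not parallel to $(1,0)^T$, while the attracting direction (interior) and repelling direction (exterior) are likewise distinct from $(1,0)^T$. This is also what actually happens in the model case: for the two blocks of \cref{ex:standard_blocks} at, e.g., $\lambda=3/2$ in the common gap, both block matrices strictly contract a cone of slopes $[0,d]$ with $d$ between the two fixed slopes of the single-resonator map, and the slope-$0$ direction $(1,0)^T$ is a boundary ray of that cone. Note that this repair adds a further constraint --- the cone must simultaneously be contracted by all blocks \emph{and} contain $(1,0)^T$ on its boundary --- so the missing construction in your first step becomes correspondingly more delicate.
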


\begin{figure}
    \centering
    \includegraphics[width=0.9\textwidth]{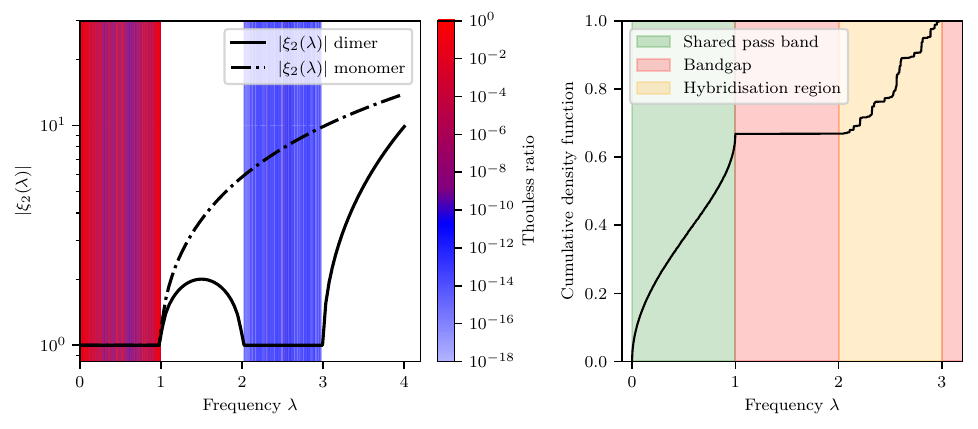}
    \caption{\textbf{Left:} Comparison of maximal block propagation matrix eigenvalue $\abs{\xi_2(\lambda)}$ for both blocks with Thouless ratios $g(\lambda_i)$ of the entire random block disordered system consisting of these block (each colored vertical line corresponds to the Thouless ratio $g(\lambda_i)$ of the eigenvalue $\lambda_i$). Where there are no such lines, the density of eigenvalues is zero. \textbf{Right:} Cumulative density function (CDF) of the total system, laid atop the different spectral regions. We can see that the CDF arrange smoothly in the shared pass band and is constant in the bandgap, reminiscent of the periodic case. However, in contrast to the periodic case, we observe an additional hybridisation region where the CDF grows fractal-like.}
    \label{fig:prop mat and cdf}
\end{figure}

This theorem is illustrated in \cref{fig:prop mat and cdf} for blocks as in \cref{ex:standard_blocks}. We plot the larger eigenvalue $\xi_2(\lambda)$ as a function of the frequency $\lambda\in \mathbb{R}$ for the single resonator and dimer block. Furthermore, we construct a random block disordered system using the respective building blocks and $M=100$ samples, and for each eigenfrequency $\lambda_i\in \sigma(\mc C)$ we also plot the corresponding Thouless ratio $g(\lambda_i)$ as a vertical line. The Thouless ratio is a measure of localisation based on the eigenmode sensitivity to boundary conditions first introduced in \cite{thouless1974Electrons} and adapted to the subwavelength setting in \cite{disorder}. A low Thouless ratio indicates a high degree of localisation for the corresponding eigenmode (see \cref{sec:thouless} for a detailed introduction).

We can see in \cref{fig:prop mat and cdf} that \cref{thm:saxonhutner} accurately predicts the bandgaps of the respective random block disordered systems. 
Moreover, we can see that in the block random setting, the prediction is even stronger. That is, a random block disordered system exhibits a bandgap at $\lambda$ if \emph{and only if} $\lambda$ is in the gap of all constituent blocks, going beyond \cref{thm:saxonhutner}. The \emph{only if} part stems from the fact that, since the blocks are sampled independently, any arbitrarily long sequence of the same block repetitions will occur with probability $1$ as $M\to \infty$. These same-block sequences contribute with eigenvalues wherever their corresponding block has a band, and thus for any frequency $\lambda$ lying in the pass band of any of the constituent blocks, we will find resonant frequencies arbitrarily close to $\lambda$ as $M\to \infty$. 

When the eigenfrequency $\lambda_i$ lies in the pass band of one of the blocks and the bandgap of the others, we can see that the Thouless ratio $g(\lambda_i)$ is much lower compared to the Thouless ratios of the eigenfrequencies lying in the shared pass band. This indicates that the eigenmodes found there exhibit a significantly stronger localisation. Nevertheless, the density of states in these regions is non-zero, distinguishing them from the bandgaps of the total system. Intuitively, each block for which this region does not lie in the bandgap contributes eigenvalues to this region---but because these eigenvalues lie in the gap of some other constituent blocks, they become strongly localised. Since such regions display a mixture of behaviours observed in both the shared pass band and the bandgap (and because of the hybridisation of the bound eigenmodes located there; see \cite{disorder}), we shall call such regions \emph{hybridisation regions}. We obtain the following complete description of the spectral regions of block disordered systems: 
A frequency $\lambda$ lies in the
\begin{description}
    \item[Shared pass band] if and only if it is in the pass band of all constituent blocks;
    \item[Bandgap] if and only if it is in the bandgap of all constituent blocks;
    \item[Hybridisation region] otherwise.
\end{description}
In the bandgap, the density of eigenmodes is zero, and only defect modes can occur (these are exponentially localised modes that are so-called because they are typical when defects are introduced to otherwise periodic or regular systems). Both the shared pass band and hybridisation regions support resonant modes, albeit at strongly distinct degrees of localisation, as demonstrated in \cref{fig:prop mat and cdf}.

This spectral characterisation into three distinct regions can also be observed in the cumulative density function (CDF) of the eigenvalue distribution. \cref{fig:prop mat and cdf} we overlay the CDF over the different spectral regions and observe stark differences: In the bandgap the CDF remains constant, as expected while in the shared pass band the CDF arranges in a smooth curve which is asymptotically vertical towards the band edges---a behaviour mirrors the flat band at the band edges in the periodic case. Finally, in the hybridisation region, we observe behaviour which has no counterpart in the periodic case. The CDF seems to behave fractal-like, staying constant in some parts and performing big jumps in others. The origins of behaviour turn out to be the strong eigenmode decay induced by being in the bandgap of one of the constituent blocks. This insight and the accelerated CDF-estimation method it yields will be further investigated in \cref{sec:metaatom}.

\section{Jacobi operators and metric transitivity}\label{sec:jacobi}
In this section, we will introduce and employ the notion of \emph{metrically transitive operators} from \cite{pastur1992Spectra}. The fact that infinite block disordered systems are indeed metrically transitive Jacobi operators will allow us to obtain a variety of results regarding the spectra of random block disordered systems, such as ergodicity and spectral convergence.

\subsection{Metric transitivity}
An important concept will be the existence of a metrically transitive group on a probability space.
\begin{definition}[Metrically transitive group]
    Let $(\Omega, \mc F, \prob)$ be a probability space and $\mc T$ a topological group of automorphisms on $\Omega$. Then $\mc T$ is called \emph{metrically transitive} if 
    \begin{enumerate}[label=(\roman*)]
        \item $\mc T$ is \emph{stochastically continuous}, \emph{i.e.}, $\lim_{T_1\to T} \prob(T_1X\triangle TX) = 0$ for any $T\in \mc T$ and $X\in \mc F$. Here, $\triangle$ denotes the symmetric difference,
        \item $\mc T$ is measure-preserving, \emph{i.e.}, $\prob(TX) = \prob(X)$ for any $T\in \mc T$ and $X\in \mc F$, 
        \item any $\mc T$-invariant set $X\in \mc F$ has either probability $1$ or $0$.
    \end{enumerate}
\end{definition}

Recall that in \cref{ssec:sampling} we introduced the probability space $(\ldz, \mc F, \prob)$ of block sequences sampled independently and identically and demonstrated that the associated resonator sequences form a probability space of bi-infinite Markov chains $(R^{\Z}, \mc F_R, \prob_R)$. The independence and identical sampling of the blocks allows us to find a \emph{metrically transitive group} $\mc T$ on the space of resonator Markov chains $(R^{\Z}, \mc F_R, \prob_R)$.
\begin{lemma}
    The group of shifts $\mc T = \{T_j \mid j\in \Z\}$ where $(T_j\mu)^{(i)} = \mm{i+j}$ is a metrically transitive group of automorphisms on $(R^{\Z}, \mc F_R, \prob_R)$.
\end{lemma}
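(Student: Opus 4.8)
The plan is to verify the three defining properties of a metrically transitive group directly, exploiting the Markov structure of the resonator sequences established above. Since $\mc T\cong\Z$ carries the discrete topology, stochastic continuity (i) is immediate: for a fixed $T=T_j$, any $T_1$ sufficiently close to $T$ equals $T$, so $\prob_R(T_1 X\triangle T X)=\prob_R(\varnothing)=0$. The substance of the lemma therefore lies in the measure-preservation property (ii) and the ergodicity property (iii), both of which are classical for a stationary, irreducible, finite-state Markov chain—precisely the object furnished by $\Phi$.

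For (ii), I would work with the two-sided stationary Markov measure attached to the transition matrix $P$. By irreducibility and finiteness of the state space $R$, the Perron--Frobenius theorem yields a unique stationary distribution $\pi$ with $\pi P=\pi$ and $\pi_i>0$ for all $i$. Using $\pi$ as the law of $\mm{0}$, every cylinder probability factorises as $\prob_R(\mm{0}=a_0,\dots,\mm{n}=a_n)=\pi_{a_0}P_{a_0 a_1}\cdots P_{a_{n-1}a_n}$, and shift-invariance of this expression is exactly the identity $\sum_i \pi_i P_{ij}=\pi_j$. Since the cylinder sets generate $\mc F_R$, measure-preservation on all of $\mc F_R$ follows by a standard monotone-class (Dynkin $\pi$--$\lambda$) argument. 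One subtlety to address here is that the pushforward $\Phi^*\prob$ as literally defined pins a block boundary at the origin (so $\mm{0}$ is always the first resonator of a block, and e.g.\ the marginal of $\mm{0}$ is $(p_1,\dots,p_D)$ rather than $\pi$), and hence is \emph{not} itself shift-invariant; the correct shift-invariant object is the stationary Markov measure obtained by taking $\pi$ as initial law, and it is this measure on which metric transitivity is to be understood.

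For (iii), the key step is to show that the shift is ergodic, which I would deduce from mixing of the stationary chain. On cylinder sets $A,B$ the quantity $\prob_R(A\cap T_n^{-1}B)$ reduces to sums of products involving $P^n_{ab}$, so it suffices to control $P^n_{ab}$ as $n\to\infty$. Irreducibility on a finite state space gives, via Perron--Frobenius, the Ces\`aro convergence $\frac{1}{n}\sum_{k=1}^n P^k_{ab}\to\pi_b$; this yields $\frac{1}{n}\sum_{k=1}^n \prob_R(A\cap T_k^{-1}B)\to\prob_R(A)\prob_R(B)$ for cylinders, and hence for all $A,B\in\mc F_R$ after an approximation argument. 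Taking $A=B=X$ for a $\mc T$-invariant set $X$ forces $\prob_R(X)=\prob_R(X)^2$, so $\prob_R(X)\in\{0,1\}$, which is (iii). Working with Ces\`aro averages avoids having to assume aperiodicity of $P$.

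The main obstacle is the ergodicity statement (iii): measure-preservation and stochastic continuity are essentially bookkeeping once $\pi$ is in hand, but establishing triviality of the invariant $\sigma$-algebra requires both the passage from cylinder sets to arbitrary measurable sets and the correct handling of a possibly periodic transition matrix. I expect the cleanest route to be the Ces\`aro-mixing computation above combined with a monotone-class approximation, or alternatively a direct appeal to the classical result (as in \cite{pastur1992Spectra}) that the shift on a stationary irreducible Markov chain is ergodic.
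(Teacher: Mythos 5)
Your proof is correct and rests, at bottom, on the same ingredients the paper invokes: the paper's proof is a one-line citation of \cite[Example 1.15b)]{pastur1992Spectra}, whose content is precisely that the shift on a stationary, homogeneous, irreducible, finite-state Markov chain is metrically transitive. What you do differently is supply the argument in full rather than delegate it --- discreteness of $\mc T$ for stochastic continuity, Perron--Frobenius for the stationary law $\pi$, and Ces\`aro mixing (which correctly avoids any aperiodicity assumption on $P$) together with a monotone-class/approximation step for the triviality of invariant sets --- so your route is self-contained where the paper's is by reference. More importantly, you catch a genuine imprecision that the paper glosses over: the pushforward $\prob_R=\Phi^*\prob$ as literally defined anchors a block boundary at the origin (so, e.g., $\prob_R(\mm{0}=(d,r))=0$ for $r>1$), hence it is \emph{not} shift-invariant, and property (ii) fails for it; the lemma holds once $\prob_R$ is understood as the two-sided stationary Markov measure with marginal $\pi_{(d,r)}=p_d\big/\sum_{d'}p_{d'}\len(B_{d'})$, which is exactly the reading under which the cited example applies. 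Flagging and repairing this (as you do) strengthens the statement; the rest of your outline, including the reduction of ergodicity to $\tfrac{1}{n}\sum_{k=1}^n P^k_{ab}\to\pi_b$ on cylinder sets and then taking $A=B=X$ invariant to force $\prob_R(X)=\prob_R(X)^2$, is the standard and correct way to fill in the details.
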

\begin{proof}
This result follows from \cite[Example 1.15b)]{pastur1992Spectra} where the essential ingredients are the fact that we consider homogeneous Markov chains and that the finite state space together with irreducibility ensures the uniqueness of the stationary state.
\end{proof}

In the infinite block sequence limit, the subwavelength resonance problem is no longer described by a capacitance matrix $\mc C = VC$ but a Jacobi operator on $\ell_\infty(\Z)$.
\begin{definition}[Jacobi Operator]
    A \emph{Jacobi operator} is a symmetric tridiagonal operator acting on $\lz$. It can be described by two sequences $\bm q^{(i)}, \bm s^{(i)} \in \ell_\infty(\Z)$ and is then given by
    \begin{align*}
        J:\lz &\to \lz\\
        \uu{i} &\mapsto (J\bm u)^{(i)} = -\sss{i}\uu{i-1} + \qq{i}\uu{i} - \sss{i+1}\uu{i+1}.
    \end{align*}
\end{definition}
Because $\qq{i}$ and $\sss{i}$ are bounded and $J$ is tridiagonal and symmetric, $J:\lz\to \lz$ is a well-defined, linear, bounded and self-adjoint operator.

We note that the symmetry requirement on $J$ does not pose any issues since the capacitance matrix eigenvalue problem on $VC$ demonstrated in \cref{thm:capapprox} is equivalent to the symmetric eigenvalue problem on $V^\frac{1}{2}CV^\frac{1}{2}\sim VC$.

\begin{remark}
    As we aim to understand the spectrum of large block disordered systems, we investigate the symmetrised capacitance matrix $\mc J = V^\frac{1}{2}CV^\frac{1}{2}$ in the large resonator array asymptotic $N\to \infty$. Here, $N\to \infty$ is understood as starting with an infinite array of subwavelength resonators (not necessarily arranged in blocks) described by sequences of wave speeds $(v_i)_{i=-\infty}^\infty$, resonator lengths $(\ell_i)_{i=-\infty}^\infty$ and spacings $(s_i)_{i=-\infty}^\infty$ and constructing a sequence of increasingly large capacitance matrices $\mc J_N \in \mathbb{R}^{(2N+1)\times (2N+1)}$ from the truncated sequences $(v_i)_{i=-N}^N, (\ell_i)_{i=-N}^N, (s_i)_{i=-N}^N$. 

    If we consider $(\mc J_N)_{N=1}^\infty$ as a sequence of operators on the sequence space $\ell^2(\Z)$ we find that this sequence converges strongly to a Jacobi operator $J\in \mc L(\ell^2(\Z))$ with the following bands
    \begin{equation}\label{eq:sqformulas}
        \sss{i} = \frac{v_{i-1}v_i}{s_{i-1}\sqrt{\ell_{i-1}\ell_i}} \quad \text{and} \quad 
        \qq{i} = \frac{v_i^2}{\ell_i}\left(\frac{1}{s_{i-1}}+\frac{1}{s_{i}}\right).
    \end{equation}

    Although the spectrum and density of states of this infinite Jacobi operator are generally intractable, we are able to gain much insight about the behaviour of large resonator arrays from its properties, most importantly its ergodicity.
\end{remark}

Given the blocks $B_1, \dots B_D$ any sequence $\chi\in \ldz$ uniquely determines a sequence of resonators $\mu \in R^{\Z}$ and thus also unique sequences of wave speeds $(v_i)_{i=-\infty}^\infty$, resonator lengths $(\ell_i)_{i=-\infty}^\infty$ and spacings $(s_i)_{i=-\infty}^\infty$. Condition \cref{eq:sqformulas} thus allows us to define a Jacobi operator $J(\mu)$ for any resonator sequence $\mu$, making $J:R^{\Z}\to \mc L(\ell_2(\Z))$ a \emph{random operator} on the probability space of resonator sequences $(R^{\Z}, \mc F_R, \prob_R)$.

In this framework, we can now make use of crucial properties of \emph{metric transitivity for operators}. We first need the following definition. 
\begin{definition}
    Let $(\Omega, \mc F, \prob)$ be a probability space with a metrically transitive group $\mc T$. A random operator $A:\Omega \to \mc L(\mc H)$ on a Hilbert space $\mc H$ is called \emph{metrically transitive} if there exists a homeomorphism $T\mapsto U_T$ from $\mc T$ to a group $\mc U = \{U_T\mid T\in \mc T\}$ of unitary operators on $\mc H$ such that 
    \[
        A(T\omega) = U_TA(\omega)U_T^*.
    \]
\end{definition}

The following result holds. 
\begin{proposition}\label{prop:block_is_metrically_transitive}
    Let $(R^{\Z}, \mc F_R, \prob_R)$ be the probability space of resonators obtained from \emph{i.i.d.} block sequences with the metrically invariant group of shifts $\mc T$.
    Associate with each element $d\in \{1,\dots, D\}$ a block $B_d$ containing $\len (B_d)$ resonators characterised by tuples $(v_1,\ell_1, s_1), \dots, (v_{\len(B_d)},\ell_{\len(B_d)}, s_{\len(B_d)})$.  Then, for any $\mu\in R^{\Z}$, the operator $J(\mu)$ constructed following \cref{eq:sqformulas} is metrically transitive with the homeomorphism $T_k \mapsto T_{k}$ mapping a shift on $R^{\Z}$ to a shift on $\ell_2(\Z)$.
\end{proposition}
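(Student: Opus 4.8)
The claim is that the random operator $J : R^{\Z} \to \mc L(\ell_2(\Z))$, built from an \emph{i.i.d.} block sequence via the band formulas \eqref{eq:sqformulas}, is metrically transitive. By definition, I must exhibit a homeomorphism $T_k \mapsto U_{T_k}$ from the shift group $\mc T$ on $R^{\Z}$ into a group of unitaries on $\ell_2(\Z)$ satisfying the covariance relation $J(T_k\mu) = U_{T_k} J(\mu) U_{T_k}^*$. The metric transitivity of $\mc T$ itself is already supplied by the preceding lemma, so the work is purely in verifying the covariance identity.

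I should check: is the shift operator $U_{T_k}$ acting at the level of resonators the right intertwiner?

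**The plan.**

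The plan is to take $U_{T_k}$ to be the natural shift on $\ell_2(\Z)$, namely $(U_{T_k}\bm u)^{(i)} = \uu{i-k}$, and verify the three requirements. First I would observe that each $U_{T_k}$ is unitary: the shift is a bijection on $\Z$ that permutes the standard orthonormal basis $(e_i)_{i\in\Z}$, hence is an isometric bijection of $\ell_2(\Z)$, with adjoint/inverse the opposite shift $U_{T_k}^* = U_{T_{-k}}$. Next I would check that $T_k \mapsto U_{T_k}$ is a group homomorphism, $U_{T_k}U_{T_l} = U_{T_{k+l}}$, which is immediate from composing shifts, and that it is a homeomorphism onto its image $\mc U$ (the map is a bijection onto $\mc U$ by construction, and continuity in both directions is trivial since $\mc T \cong \Z$ carries the discrete topology).

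The heart of the argument is the covariance relation. The key point is that the band sequences $\sss{i}$ and $\qq{i}$ in \eqref{eq:sqformulas} are \emph{local} functions of the underlying resonator data $(v_i, \ell_i, s_i)$: they depend only on finitely many neighbouring entries. Shifting the resonator sequence, $\mu \mapsto T_k\mu$, shifts the data sequences $(v_i), (\ell_i), (s_i)$ by $k$, and therefore, reading off \eqref{eq:sqformulas}, the bands of $J(T_k\mu)$ are exactly the bands of $J(\mu)$ shifted by $k$: concretely $\sss{i}(T_k\mu) = \sss{i-k}(\mu)$ and $\qq{i}(T_k\mu) = \qq{i-k}(\mu)$. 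I would then compute $(U_{T_k} J(\mu) U_{T_k}^* \bm u)^{(i)}$ directly from the Jacobi action: applying $U_{T_k}^*$, then $J(\mu)$ with a reindexed argument, then $U_{T_k}$, and collect terms. Matching this against $(J(T_k\mu)\bm u)^{(i)}$ term by term (diagonal, super- and sub-diagonal) using the shifted-band identities completes the verification.

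**Anticipated obstacle.**

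I expect the main friction to be purely notational bookkeeping of indices in the covariance computation — getting the direction of the shift consistent between the action on resonator sequences $(T_j\mu)^{(i)} = \mm{i+j}$ (as defined in the lemma) and the action on $\ell_2(\Z)$, so that the conjugation $U_{T_k}(\cdot)U_{T_k}^*$ indeed realises the \emph{same} shift and not its inverse. There is no genuine analytic difficulty: the proposition is essentially the statement that a shift-covariant tridiagonal operator built from a shift-covariant coefficient sequence is metrically transitive, and everything reduces to the locality of \eqref{eq:sqformulas} and the trivial algebra of shifts. I would therefore keep the index computation terse and emphasise the structural reason — locality of the band map combined with the shift-homomorphism property — rather than grinding through every reindexing by hand.
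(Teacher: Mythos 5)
Your proposal is correct and takes essentially the same approach as the paper: both realise $U_{T_k}$ as the shift on $\ell_2(\Z)$ (unitary with $T_k^* = T_{-k}$) and deduce the covariance identity $J(T_k\mu) = T_k J(\mu) T_k^*$ from the fact that the band formulas \cref{eq:sqformulas} are local in the resonator sequence, so shifting $\mu$ shifts the rows of $J(\mu)$. If anything, you are slightly more careful than the paper's two-line proof, since you explicitly note that $\sss{i}$ and $\qq{i}$ depend on the neighbouring entry $\mu^{(i-1)}$ as well as $\mu^{(i)}$, and that the homomorphism and (discrete-topology) homeomorphism properties of $T_k \mapsto U_{T_k}$ need checking.
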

\begin{proof}
    We begin by noting that the group of shifts $\mc U = \{ T_{k} \mid k\in \Z\}$ is a group of unitary automorphisms as $T_{k}^* = T_{-k}$. Then the fact that 
    \[
        J(T_k\mu) = T_{k}J(\mu)T_{k}^*
    \]
    follows because each entry in $\mu$ directly corresponds to a row of $J(\mu)$ by \cref{eq:sqformulas}. Hence, a shift of $\mu$ by $k$ corresponds to shifting $J(\mu)$ by $k$.
\end{proof}

\subsection{Consequences}
The metric transitivity of $J(\mu)$ ensures its ergodicity and immediately yields a flurry of results. When it is clear from context, we will often write $J$ instead of $J(\mu)$ from now on. First, we find that the spectrum of $J$ and its decomposition are deterministic. 
\begin{theorem}[{\cite[Theorem 2.16]{pastur1992Spectra}}]
Let $J$ be as in \cref{prop:block_is_metrically_transitive}. Then, $\sigma(J), \overline{\sigma_p(J)}, \sigma_{ac}(J)$ and $\sigma_{sc}(J)$ are non-random sets.
\end{theorem}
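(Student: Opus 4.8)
The plan is to exploit the ergodicity that metric transitivity furnishes: every measurable, shift-invariant scalar quantity attached to $J$ is almost surely constant, and the spectrum together with its components are precisely such quantities because they are preserved under unitary conjugation. By \cref{prop:block_is_metrically_transitive} we have $J(T_k\mu) = T_k J(\mu) T_k^*$ with $T_k$ unitary, so $\sigma(J(T_k\mu)) = \sigma(J(\mu))$, and the same equality holds for $\overline{\sigma_p}$, $\sigma_{ac}$ and $\sigma_{sc}$ since each of these is invariant under unitary equivalence. Thus the set-valued map $\mu \mapsto \sigma(J(\mu))$, and likewise each of the three component maps, is $\mc T$-invariant.

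First I would settle measurability. Since $J(\mu)$ depends on $\mu$ only through the bounded sequences $\qq{i}, \sss{i}$ of \cref{eq:sqformulas}, which are measurable functions of $\mu$, the resolvent $(J(\mu)-z)^{-1}$ and hence the spectral projections $E_{J(\mu)}(I)$ depend measurably on $\mu$ for every interval $I$. For rationals $a<b$ the event
\[
    A_{a,b} = \{\mu \in R^{\Z} : \sigma(J(\mu)) \cap (a,b) \neq \emptyset\} = \{\mu : E_{J(\mu)}((a,b)) \neq 0\}
\]
is therefore $\mc F_R$-measurable.

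Next I would invoke ergodicity. Each $A_{a,b}$ is $\mc T$-invariant by the unitary conjugation above, so condition (iii) in the definition of metric transitivity forces $\prob_R(A_{a,b}) \in \{0,1\}$. I would then define the deterministic closed set
\[
    \Sigma = \R \setminus \bigcup \{(a,b) : a,b \in \Q,\ \prob_R(A_{a,b}) = 0\}.
\]
A routine argument using that $\{(a,b) : a,b\in\Q\}$ is a countable basis of $\R$ shows $\sigma(J(\mu)) = \Sigma$ for $\prob_R$-almost every $\mu$: outside a null set no spectrum enters any rational interval of probability zero, while every rational interval of probability one must meet the spectrum. The identical scheme, applied to the events that the ac-, sc-, or pp-part of $J(\mu)$ charges $(a,b)$, yields deterministic sets equal almost surely to $\sigma_{ac}(J)$, $\sigma_{sc}(J)$ and $\overline{\sigma_p(J)}$ respectively.

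The main obstacle is the measurability and invariance bookkeeping for the individual spectral components, rather than for $\sigma(J)$ itself. Detecting $\sigma_{ac}$, $\sigma_{sc}$ and $\overline{\sigma_p}$ requires the projections $P^{ac}(\mu), P^{sc}(\mu), P^{pp}(\mu)$ coming from the Lebesgue decomposition of the spectral measures of $J(\mu)$; one must verify that these depend measurably on $\mu$ and that they intertwine with the shift, $P^{\bullet}(T_k\mu) = T_k P^{\bullet}(\mu) T_k^*$, so that the associated interval events are again $\mc T$-invariant. Granting these facts, which hold because the Lebesgue decomposition is canonical and hence commutes with unitary equivalence, the $0$--$1$ law applies verbatim and the argument concludes exactly as for $\sigma(J)$.
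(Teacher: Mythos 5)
The paper does not prove this statement at all: it imports it verbatim from \cite[Theorem 2.16]{pastur1992Spectra}, which applies once \cref{prop:block_is_metrically_transitive} establishes that $J$ is metrically transitive. Your argument is essentially the standard proof behind that citation (unitary equivariance of $\sigma$, $\overline{\sigma_p}$, $\sigma_{ac}$, $\sigma_{sc}$ under the shift, plus the $0$--$1$ law applied to the events $A_{a,b}$ over a countable base of rational intervals), and it is correct as far as it goes; the one step you rightly flag as the obstacle---measurability in $\mu$ of the ac/sc/pp spectral projections, which canonicality of the Lebesgue decomposition gives equivariance for but does not by itself establish---is exactly the technical content supplied by the cited theorem.
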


Since $J$ is self-adjoint for all $\mu\in \ldz$, we can write
\[
    J = \int_{\mathbb{R}}\lambda E_J(d\lambda) ,
\]
where $E_J(\lambda)$ is the \emph{resolution of the identity} w.r.t. $J$, an appropriately chosen family of projection operators non-decreasing in $\lambda$. 
The metric transitivity of $J$ is reflected in the fact that $E_J$ is a metrically transitive projection for any $\lambda$.

\begin{figure}
    \centering
    \includegraphics[width=0.5\linewidth]{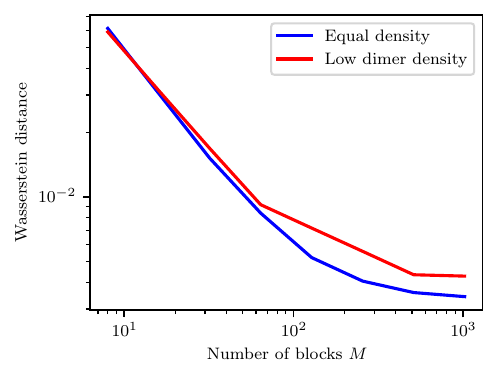}
    \caption{
    Convergence of empirical cumulative density functions under increasing system size.
    We consider random block disordered systems consisting of single resonator and dimer blocks as in \cref{ex:standard_blocks} sampled with either equal density ($p_{single}=p_{dimer}=1/2$) or low dimer density ($p_{dimer}=1/10$). We calculate the eCDF for large arrays ($M=2^{13}$) and compare this to the eCDF of smaller resonator arrays ($M=2^p, p=2,\dots 11$) averaged over $M=10^5$ realisations.}
    \label{fig:ergodicity ws distance}
\end{figure}

\begin{definition}[$K$-truncation]
    We define the $K$-truncation $J_K\in \mathbb{R}^{(2K+1)\times (2K+1)}$ of $J$ to be 
    \[
        (J_K)_{i,j} = J_{i,j} \quad \text{for } \abs{i}, \abs{j} \leq K.
    \]
\end{definition}

For $J_K$, we define the \emph{discrete integrated density of states} by
\begin{equation}
    N_K(J_K, \lambda) \coloneqq \frac{1}{2K+1}\abs{\sigma(J_K) \cap (-\infty,\lambda]},
\end{equation}
\emph{i.e.}, the number of eigenvalues of $J_K$ less or equal to $\lambda$, normalised by the total amount of eigenvalues.

Note that the discrete integrated density of states closely resembles the \emph{empirical cumulative density function} (eCDF) of a finite resonator array of size $N$ described by a capacitance matrix $\mc C\in \mathbb{R}^{N\times N}$ defined as
\begin{equation}
    \eCDF (\mc C, \lambda) \coloneqq \frac{1}{N}\abs{\sigma(\mc C) \cap (-\infty,\lambda]},
\end{equation}
the only difference being edge effects incurred due to truncation, which become negligible as the truncation size $K$ (or equivalently the system size $N$) goes to infinity.

We now have the following ergodicity result on the convergence of the density of states for block disordered systems, which describes how it converges to a deterministic limit
\begin{theorem}[{\cite[Theorems 3.2 and 4.5]{pastur1992Spectra}}]\label{thm:ergodic_convergence}
    Let $J$ be as in \cref{prop:block_is_metrically_transitive}. Then, there exists a non-random positive measure $N(J,d\lambda)$ such that,  with probability $1$, 
    \begin{enumerate}[label=(\roman*)]
        \item $N_K(J_K,d\lambda) \to  N(J,d\lambda)$ weakly as $K\to \infty$;
        \item $N(J,d\lambda) = \mathbb{E}\{\ip{e_0}{E_J(d\lambda)e_0}\}$; 
        \item $\lambda \mapsto N(J,\lambda)$ is continuous. 
    \end{enumerate}
\end{theorem}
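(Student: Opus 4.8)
The plan is to read the statement as the specialisation to our Jacobi operator $J = J(\mu)$ of the general ergodic theory of random operators developed in \cite{pastur1992Spectra}, and to supply the verification of its hypotheses rather than to reprove that theory. By \cref{prop:block_is_metrically_transitive}, $J$ is metrically transitive with respect to the shift group $\mc T$ on $(R^{\Z}, \mc F_R, \prob_R)$, and it is bounded and self-adjoint; these are exactly the standing assumptions of the cited results. I would therefore proceed in three steps: (a) record that the hypotheses hold, (b) obtain the existence of the non-random measure together with the weak convergence (i) and the identification (ii) from the Birkhoff ergodic theorem, and (c) treat the continuity (iii) separately, since it is the only delicate point.

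For (i) and (ii), the key observation is that the discrete integrated density of states has a trace representation: for any continuous test function $f$,
\[
    \int_{\mathbb{R}} f(\lambda)\, N_K(J_K, \mathrm{d}\lambda) = \frac{1}{2K+1}\sum_{\abs{i}\leq K} \ip{e_i}{f(J_K)e_i}.
\]
Replacing $J_K$ by $J$ alters only the finitely many diagonal terms near $i = \pm K$, so the truncation error is $O(1/K)$ and vanishes as $K\to\infty$. The metric transitivity relation $J(T_k\mu) = T_k J(\mu) T_k^*$ promotes through the functional calculus to $f(J(T_k\mu)) = T_k f(J(\mu))T_k^*$, whence $\ip{e_i}{f(J)e_i} = \ip{e_0}{f(J(T_i\mu))e_0}$. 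The sum is therefore a Cesàro average of the stationary ergodic sequence $g(T_i\mu)$ with $g(\mu) = \ip{e_0}{f(J(\mu))e_0}$, and Birkhoff's theorem gives, with probability one, convergence to $\mathbb{E}\{\ip{e_0}{f(J)e_0}\} = \int f\, \mathrm{d}N$, where $N(J,\mathrm{d}\lambda) \coloneqq \mathbb{E}\{\ip{e_0}{E_J(\mathrm{d}\lambda)e_0}\}$. Testing against a countable convergence-determining family of $f$ upgrades this to almost sure weak convergence, establishing (i) and (ii) at once; the measure is non-random because the expectation does not depend on $\mu$.

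The main obstacle is continuity (iii), i.e. the absence of atoms of $N$. The jump at a fixed $\lambda_0$ equals $N(\{\lambda_0\}) = \mathbb{E}\{\ip{e_0}{E_J(\{\lambda_0\})e_0}\}$, the expected weight at the origin of the spectral projection onto the $\lambda_0$-eigenspace. Here the one-dimensional tridiagonal structure is essential: because the off-diagonal entries $\sss{i}$ in \cref{eq:sqformulas} never vanish (all $v_i,\ell_i,s_i$ are strictly positive), a solution of the three-term recurrence $J\bm u = \lambda_0\bm u$ is determined by two consecutive entries, so the decaying solutions at $+\infty$ and at $-\infty$ each span at most a line and the space of genuine $\ell_2(\Z)$ eigenfunctions at $\lambda_0$ is at most one-dimensional. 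I would then argue that the event $\{\lambda_0\in\sigma_p(J(\mu))\}$ is shift-invariant, hence of probability $0$ or $1$ by ergodicity; were it $1$, a single $\ell_2(\Z)$ eigenfunction per realisation spread over the whole line would carry zero density per site, forcing $N(\{\lambda_0\})=0$ and contradicting a positive jump. Thus $N(\{\lambda_0\})=0$ for every $\lambda_0$, so $\lambda\mapsto N(J,\lambda)$ has no jumps and is continuous. This reproduces the content of the continuity part of Pastur's theorem, which I would ultimately invoke (via the metric transitivity already established) to conclude.
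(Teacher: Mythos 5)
Your proposal is correct and follows essentially the same route as the paper, whose entire proof consists of verifying the hypotheses (metric transitivity of $J(\mu)$, via \cref{prop:block_is_metrically_transitive}) and invoking \cite[Theorems 3.2 and 4.5]{pastur1992Spectra}, exactly as you do in your final reduction. The extra material you supply---the trace representation plus Birkhoff's theorem for (i)--(ii), and the finite-multiplicity/zero-density-per-site argument for (iii)---is the standard content of the cited theorems and is sound, with one small imprecision: $f(J_K)$ and $f(J)$ agree on the bulk diagonal only when $f$ is a polynomial (so that $f(J)$ is banded), and for general continuous $f$ the truncation estimate requires an intermediate polynomial-approximation step, which is routine since the operators are uniformly bounded.
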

We will call $N(J,d\lambda)$ the \emph{density of states} of the Jacobi operator $J$.

This result is demonstrated in \cref{fig:ergodicity ws distance}. In both cases (equal probability and low dimer probability, respectively), the empirical cumulative density functions converge as the system size is increased. The metric used to compare these distributions is the \emph{Wasserstein distance} defined as 
\[
    l_1(U,V) \coloneqq \int_{-\infty}^{\infty}\abs{U(\lambda) - V(\lambda)}d\lambda,
\]
for two cumulative density functions $U$ and $V$.

\section{Fractal density of states in the hybridisation region}\label{sec:metaatom}
Having proven in the last section that the density of states of block disordered systems converges to a deterministic limit, we now aim to find an efficient way of calculating it. We will exploit the fact that, similar to the behaviour shown in \cref{fig:prop mat and cdf}, the density of eigenmodes of disordered systems can be divided into three types of regions: shared pass band where the density of states arranges in a smooth band-like manner, bandgaps where the density is zero and hybridisation regions which exhibit a fractal-like density. 

By fractal-like we mean that, as can be seen in \cref{fig:dos for iid}, the density of states is highly concentrated about a number of self-similar peaks. However, weak hybridisation slightly \enquote{smooths out} this density with increasing system size. This smoothing-out is reflected in the fact that, by \cref{thm:ergodic_convergence}, the integrated density of states of the infinite Jacobi operator is continuous.
Our aim is to better understand the density of eigenfrequencies and in particular their fractal-like arrangement in these hybridisation regions as well as the localisation behaviour of the corresponding eigenmodes. 

Recall that the hybridisation region is characterised by the gap of one (but not all) of the constituent blocks. We will exploit the mode decay this induces to find highly efficient ways of estimating the density of states in these regions and uncover the origin of the fractal-like arrangement of modes.

For the rest of this section, we will limit our analysis and numerics to systems consisting of the two standard blocks (a single resonator and a dimer of resonators) characterised in \cref{ex:standard_blocks}, sampled identically and independently (\emph{i.i.d.}). Its spectral structure can be seen in \cref{fig:prop mat and cdf} and we will focus on its hybridisation region contained in $(2,3)$ lying in the bandgap of the single resonator but the pass band of the dimer block.

\subsection{Perturbative characterisation}
We begin by introducing the main tool which will allow is to quickly construct the density of states. We can view block disordered systems with blocks as in \cref{ex:standard_blocks} as materials of single resonators with an arbitrary and random amount of dimer block defects. Because the upper dimer modes are in the bandgap of the single resonators, these modes are somewhat \enquote{isolated}, allowing us to perturbatively characterise the resulting density of states.

The following will put this intuition on analytical footing:
\begin{proposition}
    Let $A$ be a normal matrix with an eigenpair $(\lambda,\bm u)$ such that $\norm{\bm u}_2=1$. Furthermore, let $\Tilde{A} = A + B$ be a normal perturbation of $A$. Then,
    \begin{enumerate}[label=(\roman*)]
        \item $(\lambda,\bm u)$ is a $\norm{B\bm u}$-pseudoeigenpair of $\Tilde{A}$;
        \item there exists an eigenvalue $\Tilde{\lambda}\in \sigma(\Tilde{A})$ such that $\abs{\Tilde{\lambda}-\lambda}\leq \norm{B\bm u}$.
    \end{enumerate}
\end{proposition}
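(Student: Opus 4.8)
The plan is to treat the two claims in sequence, with part (i) feeding directly into part (ii). For part (i), I would simply compute the residual of $(\lambda,\bm u)$ when tested against $\tilde A$. Since $A\bm u = \lambda\bm u$ and $\tilde A = A+B$, we get
\[
    (\tilde A - \lambda I)\bm u = (A-\lambda I)\bm u + B\bm u = B\bm u,
\]
so that $\norm{(\tilde A - \lambda I)\bm u} = \norm{B\bm u}$ while $\norm{\bm u}_2 = 1$. This is exactly the assertion that $(\lambda,\bm u)$ is a $\norm{B\bm u}$-pseudoeigenpair of $\tilde A$, so (i) is immediate and requires no structure beyond the definition.

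For part (ii), I would exploit the normality of $\tilde A$ via the spectral theorem. Writing $\tilde A = U\Lambda U^*$ with $U$ unitary and $\Lambda = \diag(\tilde\lambda_1,\dots,\tilde\lambda_n)$, one obtains, for any $z\notin\sigma(\tilde A)$, the resolvent-norm identity
\[
    \norm{(\tilde A - zI)^{-1}} = \max_j \frac{1}{\abs{\tilde\lambda_j - z}} = \frac{1}{\mathrm{dist}(z,\sigma(\tilde A))}.
\]
I would then split into two cases. If $\lambda\in\sigma(\tilde A)$ we are done with $\tilde\lambda = \lambda$ and distance $0$. Otherwise $\tilde A - \lambda I$ is invertible, and rearranging the identity from part (i) gives $\bm u = (\tilde A - \lambda I)^{-1}B\bm u$, whence
\[
    1 = \norm{\bm u} \leq \norm{(\tilde A - \lambda I)^{-1}}\,\norm{B\bm u} = \frac{\norm{B\bm u}}{\mathrm{dist}(\lambda,\sigma(\tilde A))}.
\]
Rearranging yields $\mathrm{dist}(\lambda,\sigma(\tilde A)) \leq \norm{B\bm u}$, which is precisely the existence of an eigenvalue $\tilde\lambda\in\sigma(\tilde A)$ with $\abs{\tilde\lambda - \lambda} \leq \norm{B\bm u}$.

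The argument is short and I do not expect a serious obstacle; the single ingredient that must be invoked with care is the resolvent-norm identity for normal matrices, which is exactly where normality of $\tilde A$ enters. For a non-normal $\tilde A$ one would only recover a Bauer--Fike-type bound weakened by the condition number of its eigenvector basis, so the sharp constant $\norm{B\bm u}$ genuinely relies on unitary diagonalisability. I would also note in passing that normality of $A$ itself is not actually used here: only the eigenrelation $A\bm u = \lambda\bm u$ and the normality of the \emph{perturbed} operator $\tilde A$ are needed.
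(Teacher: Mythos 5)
Your proof is correct and takes essentially the same route as the paper: the paper's own proof is a one-line assertion that (i) is immediate from the definition of $\tilde{A}$ and that (ii) ``follows from the fact that $\tilde{A}$ remains a normal matrix,'' and your residual computation together with the resolvent-norm identity $\norm{(\tilde{A}-zI)^{-1}} = 1/\mathrm{dist}(z,\sigma(\tilde{A}))$ is precisely the standard argument that assertion invokes. Your closing observation that only normality of $\tilde{A}$ (not of $A$) is actually needed is also accurate.
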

\begin{proof}
    (i) follows immediately from the definition of $\Tilde{A}$ and (ii) follows from the fact that $\Tilde{A}$ remains a normal matrix.
\end{proof}

This result is useful because it essentially states that the eigenmodes, which are \enquote{small} in some regions, are weakly affected by perturbations in these regions. To estimate the location of the upper eigenfrequencies contributed by dimer blocks, we begin with a dimer block surrounded by a large amount of single resonator blocks in both directions. 
Let $A$ denote the capacitance matrix of this system and $(\lambda, \bm u)$ the dimer mode eigenpair with eigenfrequency in the upper band, namely $\lambda\in (2,3)$.
Because this upper eigenfrequency $\lambda$ lies in the spectral gap of the single resonators, we know its eigenmode decays exponentially when passing through the single resonator blocks (with a decay rate of $\abs{\xi_2(\lambda)}$). This causes the eigenmode to be highly localised around the dimer block.

If we then instead consider an arbitrarily sampled block disordered system $\Tilde{A}$ containing multiple dimer blocks (assuming it contains a dimer in location as before, and there is some amount of single resonators separating the original dimer from the additional dimers) 
and denote $B = \Tilde{A}-A$ we know that $\norm{B\bm u}$ must be small because the exponential decay of $\bm u$ causes it to be small where $B$ is non-zero.
Consequently, the dimer eigenpair $(\lambda, \bm u)$ of the single dimer defect system $A$ will induce a closely corresponding eigenpair in $\Tilde{A}$.

It is also this weak perturbation from $A$ to $\Tilde{A}$ which explains the smoothing-out of the density of states as well as the notion of hybridised bound states. As we move to $\Tilde{A}$, other dimer defect sites and the original defect mode hybridises with new defect modes, causing the defect frequency to perturb slightly.

If, instead of considering only single dimer defects, we also consider local arrangements consisting of multiple dimers located closely together, then we find very close agreement with the peaks of the actual density of states, as can be seen in \cref{fig:dos for iid}.
\begin{figure}[h]
    \centering
    \includegraphics[width=0.95\textwidth]{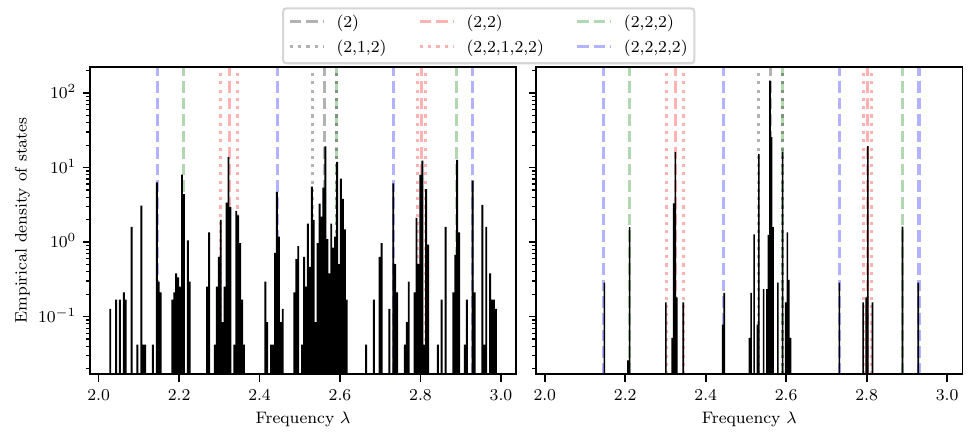}
    \caption{
    Densities of states for block disordered systems ($M = 10^5$) with varying dimer density, together with various dimer defect modes. 
    We can see that in both cases (but especially in the low dimer density case) the peaks in the density of states closely correspond to the defect modes obtained by considering various local arrangements of dimers.
    Both systems consist of single resonator and dimer blocks as described in \cref{ex:standard_blocks}.
    \textbf{Left:} Random sampling with equal probability $p_{single}=p_{dimer} = \frac{1}{2}$.
    \textbf{Right:} Random sampling with lower dimer density $p_{dimer}=\frac{1}{10}$.
    }
    \label{fig:dos for iid}
\end{figure}

This view also explains the self-similarity of the density of states, which is especially visible in the low dimer density setting of \cref{fig:dos for iid}(Right). We observe density peaks at $(2), (2,2), (2,2,2)$ and so on, which seem to get $10$-times smaller with every increase of dimer defect length. The origin of this is that in the low dimer density case with $p_{dimer} = 1/10$, one is, for example, roughly $10$ times more likely to find an arrangement $(2,2)$ than an arrangement $(2,2,2)$ at any point in the block sequence. Hence, the density of states looks somewhat self-similar with increasing dimer sequence length, although this effect is moderated by eigenmode interaction and hybridisation, preventing a truly fractal density.

\subsection{Meta-atom construction}
As seen in the previous subsection, the peaks in the density of states of large block disordered systems are accurately predicted by tracking the defect modes induced by local arrangements of dimers and single resonators. We will call such local arrangements \enquote{meta-atoms}. We can use this to calculate the density of states quickly and accurately by enumerating all relevant meta-atoms, calculating their induced defect eigenmodes and weighing by their relative likelihood.

We will begin by enumerating the meta-atoms. Recall that $\ldz$ denotes the set of sequences on the integers taking values in $\{1,\dots, D\}$, \emph{i.e.}, $\{1,\dots, D\}^{\Z}$, and $\mathbb{L}_D(M)$ its finite counterpart of length $M$, \emph{i.e.}, $\{1,\dots, D\}^M$. Any sequence $\zeta\in \mathbb{L}_D(K)$ will now denote a \emph{meta-atom of length} $K$. Every meta-atom determines through its sequence a local arrangement of single and dimer resonator blocks (in this case, we have $D=2$). We will constrain the enumerated meta-atoms by three factors:
\begin{enumerate}[label=(\roman*)]
    \item the maximal meta-atom length $L$;
    \item the maximal amount of single resonators $P$ 
    contained in the meta-atoms;
    \item every meta-atom must begin and end with a dimer block.
\end{enumerate}

These constraints are motivated as follows. Limiting the length of the considered meta-atoms in (i) ensures that we only consider meta-atoms which occur with high probability; limiting the amount of contained single resonators in (ii) ensures that we only consider meta-atoms which really act as one unit, if the number of contained single resonators increases, the contained dimer blocks begin to increasingly act as smaller individual meta-atoms making it superfluous to consider the present meta-atom; enforcing that meta-atoms begin and end with dimer blocks in (iii) ensures that they do not contain superfluous strings of single resonators on the ends.

We denote the set of all meta-atom sequences satisfying these constraints $\mc M_L^P$,
\begin{equation}
    \mc M_L^P \coloneqq  \bigg\{\zeta \in \bigcup_{k=1}^L\mathbb{L}_2(k) \mid \abs{\{j \mid \zeta^{(j)}=1\}} \leq P,\;\zeta^{(1)} = \zeta^{(-1)} = 2 \bigg\}.
\end{equation}
Condition (i) ensures that $\mc M_L^P$ is finite, namely $\abs{\mc M_L^P} \leq 2^{L+1}$ while conditions (ii) and (iii) limit its growth. In particular, by iterating over the lengths and contained single resonators in $\mc M_L^P$, we obtain
\[
    \abs{\mc M_L^P} = 1+\sum_{p=0}^P \sum_{l=p}^{L-2} \binom{l}{p} = \sum_{p=0}^{P+1}\binom{L-1}{p}.
\]

Having enumerated the meta-atoms, we now aim to calculate their induced defect modes. Let $\zeta\in \ldz$ be a meta-atom. We then attach a set amount $R$ of single resonators on both sides. Namely, we consider $\zeta' = (1)^R\circ \zeta \circ (1)^R$, where $(1)^R = (1,\dots ,1)\in \mathbb{L}_2(R)$ and \enquote{$\circ$} denotes sequence concatenation. This modified sequence now again describes a block disordered system, and we can use the generalised capacitance matrix formulation described in \cref{sec:subwavelength setting} to quickly find the corresponding spectrum $\sigma(\zeta')$. Since we are only interested in the upper eigenmodes, we take the intersection with the interval $(2,3)$ and denote $\mc S_R(\zeta) = \sigma(\zeta') \cap (2,3)$ as the \emph{set of defect modes} associated with the meta-atom $\zeta$. Due to exponential decay, if we consider the meta-atom defect modes in the case of single resonators, a small number of attached single resonators $R\approx 4$ is sufficient in practice.

This already enables us to quickly calculate the upper spectrum of a given block disordered system with sequence $\chi \in \mathbb{L}_2(M)$. As described in \cref{alg:upper_spectrum}, we can estimate the upper spectrum $\sigma(\chi)\cap (2,3)$ by passing through the sequence $\chi$ and repeatedly matching the largest possible meta-atom $\Tilde{\zeta}$ and aggregating the corresponding dimer modes $\mc S_R(\Tilde{\zeta})$ in the multiset $\sigma_{\text{calc}}(\chi)$. We refer to our openly available code for the details concerning the meta-atom calculation.

\begin{algorithm}
\caption{Estimate upper spectrum of block disordered system given block sequence $\chi$}
\label{alg:upper_spectrum}
\begin{algorithmic}[1]

\Require Sequence $\chi \in \mathbb{L}_2(M)$ 
\Ensure Calculated upper spectrum $\sigma_{\text{calc}}(\chi)$

\State $\sigma_{\text{calc}}(\chi) \gets \{\}$ \Comment{Initialise as empty multiset (with repeated elements)}
\State $j \gets 1$

\While{$j \leq \text{len}(\chi)$}
    \State $\Tilde{\zeta} \gets \text{Longest possible meta-atom in } \mc M_L^P \text{ matching } \chi \text{ starting from index } j$
    \State $\sigma_{\text{calc}} \gets \sigma_{\text{calc}} \cup S_R(\Tilde{\zeta})$
    \State $j \gets j + \text{len}(\Tilde{\zeta})$
\EndWhile

\State \Return $\sigma_{\text{calc}}(\chi)$

\end{algorithmic}
\end{algorithm}

In \cref{fig:ecdf_metaatom_calculation}, we can see that, already for fairly low meta-atom lengths $L$ and single resonator count $P$, the cumulative density function is reconstructed very well. Furthermore, we can see that the Wasserstein distance comparing the estimated to the actual density function decreases polynomially as the meta-atom length $L$ increases (we scale $P\sim L/2$ alongside with it).

\begin{figure}[h]
    \centering
    \begin{subfigure}[t]{0.49\textwidth}
        \centering
        \includegraphics[width=\textwidth]{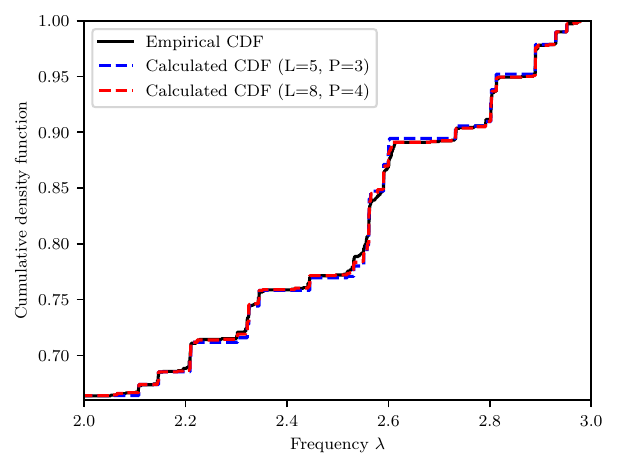}
       \caption{Empirical cumulative density function (CDF) of a block disordered system of $M=10^4$ resonators with equal block probabilities ($p_{dimer}=p_{single} = 1/2$) along with the densities estimated by the meta-atom approach.}\end{subfigure}
    \hfill
    \begin{subfigure}[t]{0.49\textwidth}
        \centering
        \includegraphics[width=\textwidth]{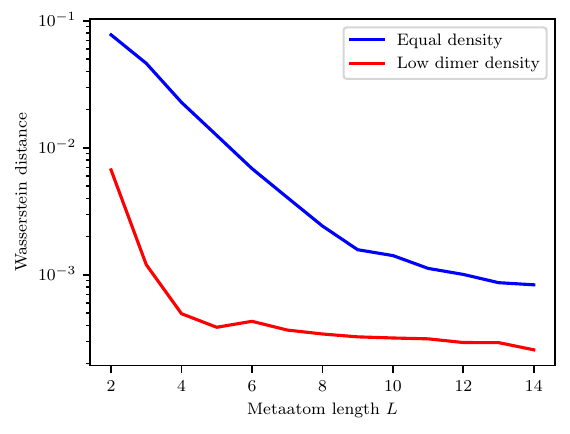}
       \caption{Wasserstein distance comparing the empirical cumulative density function and the meta-atom estimate for increasing meta-atom length $L$ (the amount of permissible single resonators is scaled as $L/2$)}\end{subfigure}
    \caption{Convergence of the meta-atom estimate using \cref{alg:upper_spectrum} to the empirical CDF of a large system as the meta-atom length $L$ and single resonator amount $P$ is increased. Already for small numbers of $L$ and $P$, the estimate agrees extremely well with the empirical CDF.}
    \label{fig:ecdf_metaatom_calculation}
\end{figure}

The great advantage of \cref{alg:upper_spectrum} is its time complexity in the number of resonators $M$. That is, the time required to estimate the spectrum grows linearly in the number of blocks, $\mc O (M)$. While the number of possible meta-atoms grows exponentially as $L$ and $P$ increase, the meta-atom defect modes only have to be calculated once and can be looked up efficiently, yielding a total amortised complexity of $\mc O (ML)$ if $P$ is scaled proportionally to $L$ and $\mc \BO(M\log L)$ if $P$ is held constant.

Finally, we aim to reconstruct the density of states not for a particular block disordered realisation but in expectation. One possible approach to that end is to estimate the relative likelihood of the meta-atoms in $\mc M_L^P$. To do so, we might employ a combinatoric approach or try to estimate it empirically. However, there is also a very straightforward way to use \cref{alg:upper_spectrum} in order to estimate the expected density of states. The key insight is to leverage the ergodicity of our setting, which, following \cref{thm:ergodic_convergence}, ensures that as $M\to \infty$ the finite empirical densities of state converge to a deterministic infinite limit. Essentially, for large numbers of sampled blocks $M$, the ergodicity ensures that the actual density of states varies only very little between the realisations. Therefore, we can estimate both the deterministic density of states in the infinite setting and the expected density of states for a large but finite system simply by sampling $\chi \in \mathbb{L}_D(M)$ for some large $M$ and apply \cref{alg:upper_spectrum} to estimate the density of states. Doing so is easy because the complexity of \cref{alg:upper_spectrum} only scales linearly with $M$.

One issue with this approach and the meta-atom approach in general is that it does not take edge effects into account, which in the case of small $M$ can have a significant impact on the spectral distribution. However, since for small $M$ it is easy to simply calculate the eigenvalues directly, this is not a severe limitation.

\subsection{Variable decay}
\begin{figure}[h]
    \centering
    \begin{subfigure}[t]{0.49\textwidth}
        \centering
        \includegraphics[width=\textwidth]{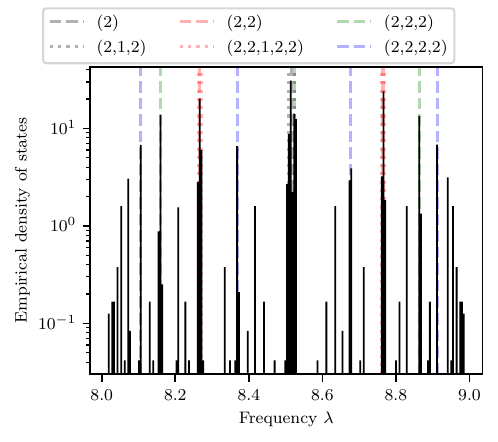}
       \caption{Density of states in the hybridisation region $(8,9)$ of a block disordered system with the standard single resonator block and modified dimer blocks (first dimer spacing $s_1=1/4$), together with various dimer defect modes. $M=10^4$ total blocks sampled with equal density.}\end{subfigure}
    \hfill
    \begin{subfigure}[t]{0.49\textwidth}
        \centering
        \includegraphics[width=\textwidth]{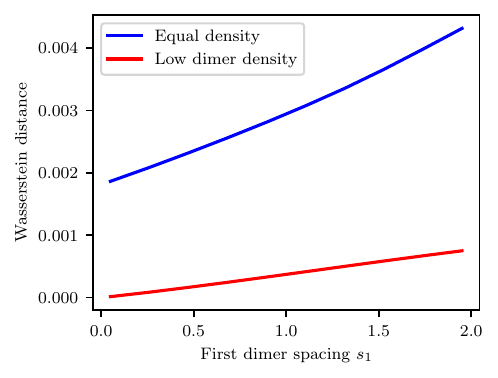}
       \caption{Wasserstein distance comparing the eCDF of a random block disordered system with the respective meta-atom estimate, as a function of first dimer spacing $s_1$.
       }\end{subfigure} 
    \caption{As we decrease the first spacing $s_1$ of the dimer blocks, the corresponding upper band at $(2/s_1, 2/s_1+1)$ gets pushed further into the bandgap $(1,\infty)$ of the single resonator block. This causes weaker hybridisation which in turn increases the accuracy of the meta-atom estimate, as can be seen from the decreasing Wasserstein distance.
    }
    \label{fig:density of states deeper gap}
\end{figure}
As discussed at the beginning of this section, the main ingredient for the meta-atom approximation is that the hybridisation region lies in the bandgap of the single resonator block, causing decay of eigenmodes characterised by the larger eigenvalue $\abs{\xi_2(\lambda)} > 1$ of the single resonator block propagation matrix $P_{B_{single}}(\lambda)$. Thus, it is natural to expect the accuracy of the meta-atom approximation to increase as we get deeper into the bandgap of the single resonator and the larger eigenvalue of the single resonator block $\abs{\xi_2(\lambda)}$ increases. To investigate this, we consider a modified dimer block with $\ell_1 = \ell_2 = 1$ and $s_1 = 1/4, s_2 = 2$. This dimer block has a pass band at $(8,9)$ causing a block disordered system consisting of single resonator blocks and such modified dimer blocks to exhibit a hybridisation region at $(8,9)$ that is significantly deeper in the bandgap $(0,\infty)$ of the single resonator block. As a consequence, the larger eigenvalue of the single resonator block in the middle of new hybridisation region $(8,9)$ is $\abs{\xi_2(8.5)} \approx 32$ compared to $\abs{\xi_2(2.5)} \approx 8$ in the original hybridisation region $(2,3)$.

In \cref{fig:density of states deeper gap}(\textsc{a}), we plot the empirical density of states for the block disordered system with a modified dimer block ($s_1=1/4$). We observe that, compared to the block disordered system with the standard block as in \cref{fig:dos for iid}(Left), the higher decay leaves the meta-atom probabilities unchanged, but causes the peaks to be much more closely spaced together. This in turn causes the meta-atom estimate to be more accurate, as can be seen in \cref{fig:density of states deeper gap}(\textsc{b}) where the Wasserstein distance between the eCDF of the actual block disordered system and the meta-atom estimate decreases as the first dimer spacing $s_1$ approaches $0$. This is true both under equal density sampling and low dimer density. 

\section{Dependent sampling}\label{sec:dependent sampling}
In this section, we investigate the density of states in the hybridisation regions of block disordered systems when the block sequence $\chi \in \ldm$ is no longer sampled independently. We introduce multiple alternative sampling, including hyperuniform sampling and quasiperiodic sequences, and see that in all of these cases the meta-atom approximation continues to function---often even better than in the \emph{i.i.d.} sampling case. Intuitively, this can be understood in the sense that \emph{i.i.d.} sampling is the least structured and thus has the largest number of meta-atoms that need to be considered for an accurate spectral estimate.
\begin{figure}[h]
    \centering
    \includegraphics[width=0.9\textwidth]{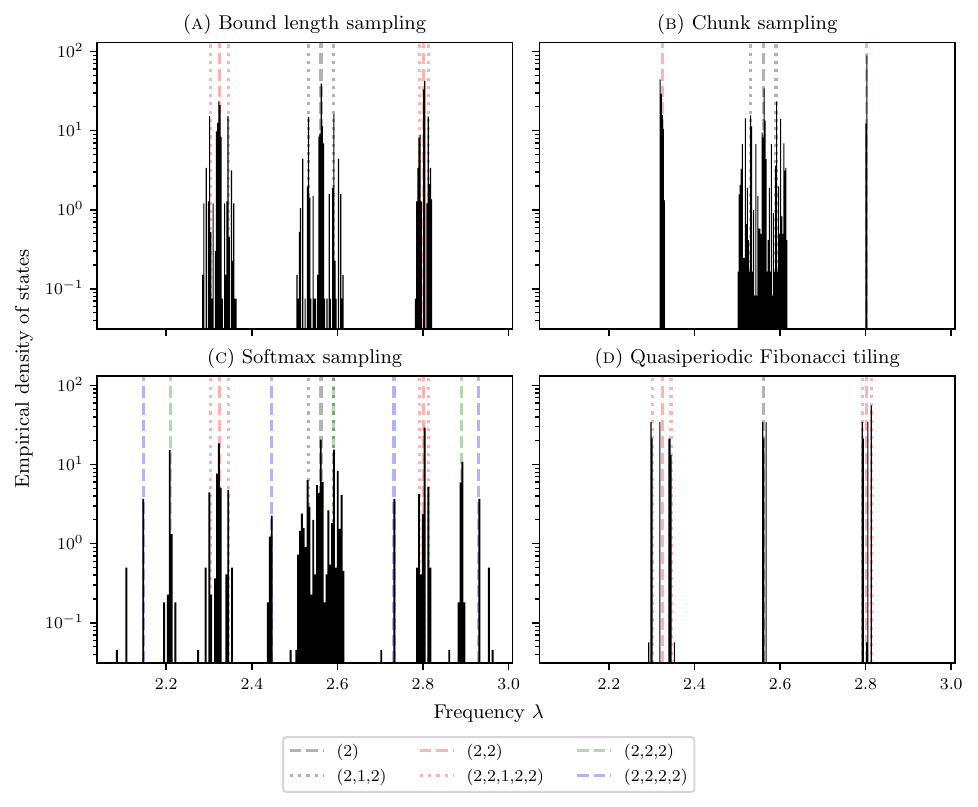}
    \caption{Densities of states for a variety of block disordered systems with dependent sampling, together with various dimer defect modes. All of them consist of single resonator and dimer blocks as described in \cref{ex:standard_blocks}. We can see that the densities are dominated by just a few defect modes.}
    \label{fig:DoS dependent sampling}
\end{figure}

\subsection{Bound length sampling}
The first and simplest type of dependent sampling that we consider is \emph{bound length sampling} whereby for any block sequence $\chi$, a block $d$ may only occur consecutively $B_d$ times. This sampling can be achieved by starting with \emph{i.i.d.} sampled sequences, setting the probability of any sequence that contains more than $B_d$ occurrences of block $d$ consecutively to $0$ and rescaling the remaining probabilities.
Due to the fact that this restriction is homogeneous and does not introduce any long-range correlation, the group of shifts $\mc T$ on $\ldz$ remains a \emph{metrically transitive group}. As a consequence, all the convergence results from \cref{sec:jacobi} continue to hold also under this sampling.

The main difference of bound length sampling compared with \emph{i.i.d.} sampling is the restriction on the possible meta-atoms, as, for example, for a dimer block occurrence length bound $B_2 = 2$, trimer defect modes $(2,2,2)$ no longer appear, as can be seen in \cref{fig:DoS dependent sampling}.
\subsection{Hyperuniform sampling}
An interesting class of disordered systems is the class of \emph{hyperuniform systems} \cite{torquato2018Hyperuniform, torquato_rev,carminati}. These systems are characterised by a mixture of short-range disorder and a suppression of large-scale density fluctuations and thus form an intriguing middle ground between perfectly ordered crystals and completely disordered uniformly random systems.

For a random homogeneous point process $\mc S(x)$ in $\mathbb{R}^d$, hyperuniformity is characterised by the fact that the variance $\Var[\#_R]$ grows slower than the volume over any spherical window where $\#_R$ denotes the number of points within the $0$-centred sphere of radius $R$. Namely, 
\[
    \lim_{R\to \infty} \frac{\Var[\#_R]}{R^d} = 0.
\]
Our aim now is to extend this concept to block disordered systems. A natural approach might be the formalism for hyperuniform two-phase media introduced in \cite{torquato2018Hyperuniform}. Unfortunately, because of the way the resonator blocks are arranged, block disordered systems are not homogenous at short ranges, posing some analytical difficulties. To remedy this, we instead leverage the block disordered structure and extend the concept of hyperuniformity to sequences $\chi \in \ldz$.
\begin{definition}
    Consider a random sequence space $(\ldz, \mc F, \prob)$ such that the group of shifts $\mc T$ on $\ldz$ is a \emph{metrically transitive group}. Furthermore, denote by $\#_{d,R}(\chi)$ the number of occurrences of symbol $d$ in the $R$-truncated sequence $\chi$, namely, 
    \[
        \#_{d,R}(\chi) = \abs{\{j\in \N\cap [-R,R] \mid \chi^{(j)}=d\}}.
    \]

    We then call the random sampling $(\ldz, \mc F, \prob)$ \emph{hyperuniform} if the occurrence count variance $\Var \#_{d,R}(\chi)$ grows \emph{sublinearly} in the truncation window, \emph{i.e.},
    \begin{equation}
        \lim_{R\to \infty} \frac{\Var \#_{d,R}(\chi)}{R} = 0,
    \end{equation}
    for any $\chi \sim (\ldz, \mc F, \prob)$ and $d\in 1,\dots,D$.
\end{definition}

In the following, we shall consider two distinct types of hyperuniform sampling: \emph{Chunk sampling} which severely limits the variety of allowable meta-atoms and \emph{softmax sampling}, which does not.
In the case of chunk sampling, blocks are grouped into short sequences \emph{chunks} such that each chunk contains exactly the same amount of each block. Then, instead of sampling blocks individually and \emph{i.i.d.}, one samples the chunks \emph{i.i.d.}, ensuring that on large scales all the blocks occur exactly equally often. 

In particular, for the two-block case $D=2$, we consider the chunks $\varphi_1 = (2,1)$ and $\varphi_2 = (1,2)$. We can then construct a hyperuniform sequence $\chi\in \mathbb{L}_2(\Z)$ by first sampling an \emph{i.i.d.} sequence $\theta\in \mathbb{L}_2(\Z)$ and then concatenating the chunks accordingly:
\[
    \chi = \dots \circ \varphi_{\theta^{(-1)}} \circ \varphi_{\theta^{(0)}} \circ \varphi_{\theta^{(1)}} \circ \dots.
\]

It is clear that this construction severely limits the variety of available meta-atoms; for example, the meta-atom $(2,2,1,2,2)$ cannot now occur. This can be seen in \cref{fig:DoS dependent sampling}(\textsc{b}) where the density vanishes at the meta-atom $(2,2,1,2,2)$ as opposed to the bound length sampling \cref{fig:DoS dependent sampling}(\textsc{a}) where we see prominent spikes.

Another type of hyperuniform sampling is \emph{softmax sampling}. Here, instead of grouping the sampled blocks in chunks, we sample the blocks individually but adjust the relative block probabilities $p_1, \dots, p_D,$ based on the occurrence count of the previously sampled blocks. We construct the block sequence $\chi\in \ldz$ as follows. We begin by sampling block $\oo{0}$ with probabilities $p^0_1,\dots p_D^0 = p_1, \dots, p_D$, and for any of the following blocks $\oo{\pm(j+1)}$, we modify the probabilities as follows:
\begin{equation}\label{eq:softmax}
    p_d^{\pm(j+1)} = \frac{\exp\left[\beta (p_d(2j+1)-\#_{d,j}(\chi))\right]}{\sum_{d'=1}^D\exp\left[\beta (p_{d'}(2j+1)-\#_{d',j}(\chi))\right]},
\end{equation}
where $\beta\in [0,\infty)$ is called the \emph{temperature} and controls the strength on the occurrence count regularisation. Essentially, when determining the probabilities for the $(j+1)$\textsuperscript{th} and $(-j-1)$\textsuperscript{th} blocks, \cref{eq:softmax} compares the current block counts $\#_{d,j}(\chi)$ with the expected block counts $p_d(2j+1)$ and weighs the sampling probabilities according to the discrepancy. This has the effect that blocks which were sampled more often than expected are less likely to be sampled going forward, and thus represents a kind of regularisation pushing the block occurrence counts closer to their expected values. This in turn effectively means that long-range block occurrence fluctuations are suppressed, making this sampling \emph{hyperuniform} as well.

In contrast to the chunk sampling, softmax sampling does not impose any restrictions on the possible meta-atoms but rather makes meta-atoms consisting of long chains of identical blocks extremely unlikely. This can be seen in \cref{fig:DoS dependent sampling}(\textsc{c}) where already the $4$-dimer meta-atom $(2,2,2,2)$ is significantly less likely as compared to the \emph{i.i.d.} sampled system with the same block illustrated in \cref{fig:dos for iid} (Left).

We refer to \cref{sec:hyper-demo} for a numerical illustration of the hyperuniformity of the two samplings introduced here.

\subsection{Quasiperiodic sampling}
Finally, we consider samplings which are deterministic but not periodic, such as \emph{quasiperiodic sequences} \cite{aperiodic,pastur1992Spectra,simon1982almost}.
The canonical quasiperiodic sequence we consider is the classical \emph{Fibonacci tiling} (see, for instance, \cite{jagannathan2021fibonacci}) that can be constructed using a simple replacement rule.  We define the $j$\textsuperscript{th} \emph{Fibonacci sequence} $\chi_j\in \mathbb{L}_2(M)$ recursively as $\chi_0 = (1)$ and $\chi_{i+1}$ obtained from $\mc \chi_{i}$ using the replacement rules $1 \mapsto 2$ and $2 \mapsto 2, 1$. The spectra of operators with coefficients determined by Fibonacci tilings have been shown to be well approximated by either periodic supercell approximations \cite{damanik2017isospectral, davies.morini2024Super, shubin1978almost} or lifting into a higher-dimensional periodic ``superspace'' (the existence of which is the defining characteristic of quasiperiodicity) \cite{bouchitte2010homogenization, davies2024convergence, rodriguez2008computation}.

A fundamental property of this Fibonacci replacement rule is that $2$ may \emph{at most} appear twice in a row while $1$ may appear at most once in a row. Furthermore, the replacement rules also severely limit the variety of locally occurring meta-atoms, meaning the same patterns appear repeatedly in the structure (albeit, not periodically). This causes the density of states to be dominated by just a small number of meta-atoms, as can be seen in \cref{fig:DoS dependent sampling}(\textsc{d}).
\subsection{Sampling comparison}
\begin{figure}[h]
    \centering
    \includegraphics[width=0.6\textwidth]{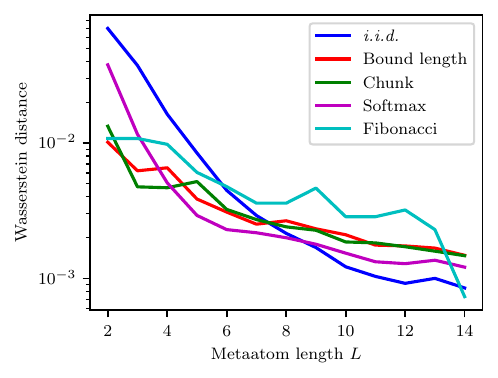}
    \caption{
    Wasserstein distance between the empirical cumulative density function and the meta-atom estimate for increasing meta-atom length $L$ (the amount of permissible single resonators is scaled as $L/2$) for block disordered systems constructed according variety of block sampling methods. The blocks are the standard blocks as in \cref{ex:standard_blocks}. We can see that the accuracy increases with increasing meta-atom lengths $L$, although at different speeds.}
    \label{fig:WS dependent sampling}
\end{figure}
Having introduced all these different types of block sequence samplings, we now aim to compare them with each other and with the \emph{i.i.d.} sampling. Looking at \cref{fig:DoS dependent sampling}, we can see that one of the most visible distinctions is the restriction on the available meta-atoms which some samplings impose. This causes the corresponding densities of state to be concentrated more tightly on the remaining meta-atoms. One would expect that this would work in favour of \cref{alg:upper_spectrum} and allow a more accurate estimation of the density of states compared to the \emph{i.i.d.} sampling case. We investigate this in \cref{fig:WS dependent sampling} where for each of the sampling methods, we sample a block sequence of length $M\approx 15'000$ and calculate the spectrum of the corresponding block disordered system with the standard block as in \cref{ex:standard_blocks}. For increasing meta-atom lengths $L$, we then estimate the spectrum from these block sequences using \cref{alg:upper_spectrum} and plot the Wasserstein distance between the estimated CDF and the empirical CDF. 

For all types of dependent sampling we can see that indeed, for small meta-atom lengths ($L<8$), the spectrum is easier to estimate compared to the \emph{i.i.d.} case and we very quickly obtain a good estimate (Wasserstein distance $<10^{-2}$) already for $L=3$. This confirms the intuition that in the case of dependent sampling, the density is already well captured by a very small amount of meta-atoms. 

However, we also observe that as $L$ increases beyond $8$, a reversal happens and the estimate is more accurate for the \emph{i.i.d.} case. This can be explained by the fact that our choice of the meta-atoms being considered $\mc M^P_L$ is suboptimal in the non-\emph{i.i.d.} case. Namely, it contains many of meta-atoms that cannot occur in the block sequences. At the same time, due to the fact that for these dependent sequences only certain meta-atoms are allowed, a larger part of the density may be concentrated at the high length meta-atoms which \emph{do} occur, requiring a higher meta-atom length $L$ to be captured.

To ameliorate this, the meta-atom space $\mc M^P_L$ might be adjusted depending on the sampling. Either by omitting meta-atoms which do not occur anyway, or by preferentially containing meta-atoms with higher occurrence probability. For a fixed amount $\abs{\mc M^P_L}$ of total meta-atoms under consideration, this might yield a much more efficient density estimation and thus close the approximation gap observed in \cref{fig:WS dependent sampling}.

The one outlier, Fibonacci sampling, demonstrates a possibly counter-intuitive property of \cref{alg:upper_spectrum} for spectral estimation. Namely, as the allowable meta-atom length $L$ is increased, the accuracy of the spectral estimate may get worse. This is due to the fact that \cref{alg:upper_spectrum} greedily matches the meta-atoms to the block sequence and thus may not achieve the overall optimal decomposition of the block sequence into meta-atoms. For random sequences, we can see in \cref{fig:WS dependent sampling} that this does not impose any difficulty as the estimation accuracy almost always increases with increasing meta-atom length $L$. But for the highly structured Fibonacci sequences we can see that the estimation accuracy oscillates in the meta-atom length. Nevertheless, even in the Fibonacci case, the estimation accuracy envelope still increases, and we are also able to obtain a highly accurate estimate of the density of states. This may be surprising considering the complexity of the density of states for quasiperiodic systems.
\section*{Acknowledgments}
   The work of AU was supported by Swiss National Science Foundation grant number 200021--200307.  

\section*{Code availability}
The software used to produce the numerical results in this work is openly available at \\ \href{https://doi.org/10.5281/zenodo.15489499}{https://doi.org/10.5281/zenodo.15489499}.

\appendix
\section{Hyperuniformity characterisation} \label{sec:hyper-demo}
\begin{figure}
    \centering
    \includegraphics[width=0.6\textwidth]{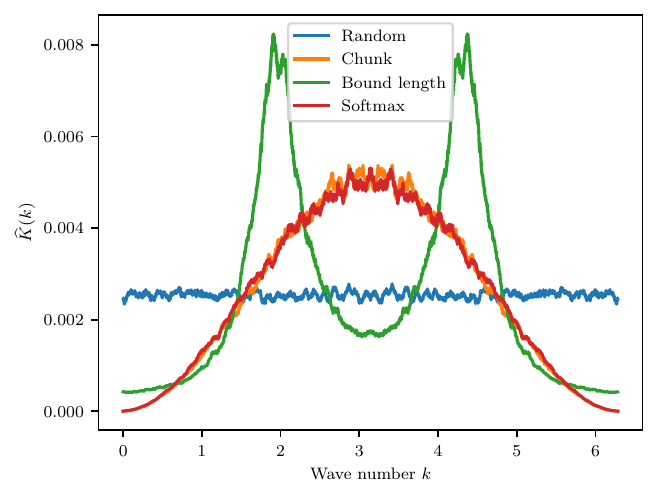}
    \caption{Fourier transform $\hat{K}(k)$ of the autocovariance for a variety of sampling methods. We can see that for both the hyperuniform chunk sampling and the softmax sampling, the Fourier transform $\hat{K}(k)\to 0$ as $\abs{k}\to 0$.
    The autocovariance is calculated empirically over sequences $\chi$ of length $M=10^5$.}
    \label{fig:hyperuniform autocovariance}
\end{figure}
We demonstrate the hyperuniformity of the chunk and softmax samplings in \cref{sec:dependent sampling} by adapting the \emph{autocovariance characterisation of hyperuniformity} for two-phase media found in \cite{torquato2018Hyperuniform} to the discrete two-symbol sequence case. 

For some homogeneous random sequence sampling $(\mathbb{L}_2(\Z), \mc F, \prob)$, we define the \emph{block densities}
\[
    p_d \coloneqq \prob(\{\oo{0}=d\}) \quad d=1,2,
\]
the \emph{two-point probability}
\[
    S_2^{(d)}(r) \coloneqq \prob(\{\oo{0}=\oo{r}=d\}) \quad d=1,2,
\] and the \emph{autocovariance}
\[
 K(r) = S_2^{(1)}(r) - (p_1)^2 = S_2^{(2)}(r) - (p_2)^2
\]
for some $r\in \Z$.

If we assume that $(\mathbb{L}_2(\Z), \mc F, \prob)$ has no \emph{long-range correlation}, then we find that
\[
    K(r) \to 0 \quad \text{as} \quad \abs{r}\to \infty.
\]
We can now characterise the hyperuniformity in terms of the Fourier transform.
Consider 
\[
    \hat{K}(k) = \sum_{r\in \Z} K(r)e^{-\i rk}
\]
periodic in $k\in [0,2\pi]$.
Then, $(\mathbb{L}_2(\Z), \mc F, \prob)$ is \emph{hyperuniform} if and only if
\begin{equation}
    \hat{K}(k) \to 0 \quad \text{as} \quad \abs{k}\to 0.
\end{equation}

In \cref{fig:hyperuniform autocovariance}, we calculate the Fourier transform of the autocovariance for all random samplings introduced in \cref{sec:dependent sampling} as well as the \emph{i.i.d.} sampling. We can see that the chunk and softmax samplings are indeed hyperuniform while the bound length and \emph{i.i.d.} sampling are not, as expected.

\section{Thouless criterion for subwavelength resonators} \label{sec:thouless}
In this section, our aim is to briefly restate the Thouless criterion of localisation for subwavelength resonators, as introduced in \cite{disorder}.

Given some array of $N$ subwavelength resonators described by a generalised capacitance matrix $\mc C\in \mathbb{R}^{N\times N}$, we aim to measure the sensitivity of the $N$ eigenfrequencies $\lambda_1<\dots <\lambda_N$\footnote{Note that $\mc C$ is diagonalisable because it is similar to a Hermitian matrix, and has only simple eigenvalues because it is tridiagonal.} to quasiperiodic boundary conditions. To that end, we take the finite resonator array and periodise it by copying it infinitely often in both directions. As a consequence, we obtain an infinitely periodic array consisting of a unit cell containing $N$ resonators. Using Floquet--Bloch theory, as for example in \cite{ammari.barandun.ea2023Edge}, we find that the spectrum of the periodised system is given by $N$ bands $\lambda_1(\alpha),\dots, \lambda_N(\alpha)$ for some quasimomentum $\alpha$ in the first Brillouin zone $Y^* \coloneqq[-\pi/\iL,\pi/\iL]$. Here, $\iL$ denotes the length of the unit cell, which, by construction, corresponds to the length of the original resonator array.

Following \cite{ammari.barandun.ea2023Edge}, we can consider the \emph{quasiperiodic capacitance matrix} 
\begin{gather}\label{eq:qpcapmat}
    C^\alpha = \left(\begin{array}{cccccc}
         \frac{1}{s_N}+\frac{1}{s_1}& -\frac{1}{s_1}&&&& - \frac{e^{-\i\alpha\iL}}{s_N} \\
         -\frac{1}{s_1}& \frac{1}{s_1}+\frac{1}{s_2}& -\frac{1}{s_2} \\
         & -\frac{1}{s_2} & \frac{1}{s_2}+\frac{1}{s_3}& -\frac{1}{s_3}\\
         &&\ddots&\ddots&\ddots \\
         &&&-\frac{1}{s_{N-2}}& \frac{1}{s_{N-2}}+\frac{1}{s_{N-1}}& -\frac{1}{s_{N-1}}\\
         - \frac{e^{\i\alpha\iL}}{s_N}&&&&-\frac{1}{s_{N-1}}&\frac{1}{s_{N-1}}+\frac{1}{s_N}
    \end{array}\right) \in \mathbb{C}^{N\times N},
\end{gather}
and find the band functions at $\alpha$ to be the eigenvalues of the \emph{generalised quasiperiodic capacitance matrix} $\mc C^\alpha \coloneqq VC^\alpha$. The band functions cannot intersect and the ordering $\lambda_1(\alpha)<\dots < \lambda_N(\alpha)$ holds for any $\alpha\in Y^*$ allowing one to uniquely associate each band function $\alpha\mapsto \lambda_i(\alpha)$ with the corresponding eigenvalue $\lambda_i$ of the finite system, consisting of $N$ resonators.

As can be seen in \cref{eq:qpcapmat}, we need to choose the spacing $s_N$ that corresponds to the separation with which the unit cells are arranged during periodisation. In the block disordered setting, there is a natural choice: 
\begin{equation}\label{eq:sn}
    s_N \coloneqq s_{\len (B_{\oo{M}})} (B_{\oo{M}}).
\end{equation}

We can now estimate the \emph{level shifts} defined as the average shift in the band function frequency
\begin{equation}
    \delta\lambda_i \coloneqq \frac{1}{2\pi\iL}\int_{\alpha\in Y^*} \abs{\lambda_i(\alpha)-\lambda_i(0)}d\alpha .
\end{equation}

In practice, this is well approximated by the periodic and anti-periodic boundary conditions (\emph{i.e.}, $\alpha\in \{0,\pi/\iL\}$), \emph{i.e.},
\begin{equation}\label{eq:energy_shift_approx}
    \delta\lambda_i \approx \abs{\lambda_i(\pi/\iL)-\lambda_i(0)}.
\end{equation}

The \emph{Thouless ratio} $g(\lambda_i)$ for a given frequency $\lambda_i$ is then defined as the ratio of the level shift over the \emph{mean level spacing} $\Delta(\lambda_i)$ around that frequency. Namely,
\begin{equation}\label{eq:thoulessratio}
    g(\lambda_i) \coloneqq \frac{\delta \lambda_i}{\Delta(\lambda_i)} \quad \text{with} \quad \Delta(\lambda_i) = \frac{1}{D(\lambda_i)\iL},
\end{equation}
 where $D(\lambda)$ denotes the \emph{approximated density of states}\footnote{The density of states for our purposes denotes the \emph{expected} number of eigenvalues in the interval $[\lambda - \frac{\mathrm{d}\lambda}{2}, \lambda + \frac{\mathrm{d}\lambda}{2}]$.} at $\lambda$ obtained by interpolation from the empirical eigenvalues $\lambda_1, \dots, \lambda_N$.

In practice, we estimate the mean level spacing $\Delta(\lambda_i)$ empirically by calculating the approximate density of states $D(\lambda)$ using a Gaussian kernel density estimate on the eigenvalues $\lambda_1,\dots, \lambda_N$ of $\mc C$.

Intuitively, the Thouless ratio for a given eigenfrequency $\lambda_i$ as calculated in \cref{eq:thoulessratio} measures the sensitivity of the resonant frequencies to boundary conditions, normalised by the number of eigenfrequencies close by. This normalisation makes sense because, as noted above, the band functions cannot cross and are thus \enquote{sandwiched} between the surrounding bands, limiting their variation proportionally.

Following \cite{thoulessnumerical, thouless1974Electrons}, we relate the Thouless ratio $g(\lambda_i)$ of the eigenfrequencies $\lambda_i$ of $\mc C$ to the localisation behaviour of the associated eigenmodes $\bm u_i$. In particular, we find that
\begin{equation}\label{eq:thoulesscriterion}
    \bm u_i \text{ is } \begin{cases}
        \text{delocalised} & \text{ if } g(\lambda_i) \approx 1,\\
         \text{localised} & \text{ if } g(\lambda_i) \ll 1.
    \end{cases}
\end{equation}
This constitutes the \emph{Thouless criterion of localisation} applied to the disordered systems of subwavelength resonators. 
The core idea is that any localised eigenmode $\bm u_i$ must have exponentially small magnitude $\bm u_i^{(1)}, \bm u_i^{(N)} \approx 0$ at the system edges and thus be insensitive to boundary conditions. This turns out to be a robust way of determining the localisation of eigenmodes, and we refer to \cite{disorder} for a more detailed discussion and demonstration of the relation between the Thouless criterion and localisation.
 \printbibliography

\end{document}